\newtheorem{theorem}{Theorem}
\newcommand*{\colorboxed}{}
\def\colorboxed#1#{%
  \colorboxedAux{#1}%
}
\newcommand{\comment}[1]{}
\newcommand*{\colorboxedAux}[3]{%
  \begingroup
    \colorlet{cb@saved}{.}%
    \color#1{#2}%
    \boxed{%
      \color{cb@saved}%
      #3%
    }%
  \endgroup
}
\begin{document}
%
\title{On the Fault Tolerant Distributed Data Caching using LDPC Codes in Cellular Networks }

\author{
Elif Haytaoglu, ~Erdi Kaya and ~Suayb S.~Arslan
 \thanks{E.~Haytaoglu is with the Department
of Computer Engineering, Pamukkale University, Denizli,
Turkey e-mail:eacar@pau.edu.tr, Erdi Kaya is with the Department
of Computer Engineering, Ankara University, Ankara, Turkey e-mail:erdik@ankara.edu.tr and S.~S.~Arslan is with Department of Computer Engineering, MEF University, Istanbul, Turkey e-mail:arslans@mef.edu.tr.}
  }
%
%
%

\comment{\author{Michael~Shell,~\IEEEmembership{Member,~IEEE,}
        John~Doe,~\IEEEmembership{Fellow,~OSA,}
        and~Jane~Doe,~\IEEEmembership{Life~Fellow,~IEEE}
\thanks{M. Shell is with the Department
of Electrical and Computer Engineering, Georgia Institute of Technology, Atlanta,
GA, 30332 USA e-mail: (see http://www.michaelshell.org/contact.html).}
\thanks{J. Doe and J. Doe are with Anonymous University.}
\thanks{TCOM version based on Michael Shell's bare{\textunderscore}jrnl.tex version 1.3.}}}
%
%
\markboth{}%
{}
%
\maketitle
\begin{abstract}
The base station-mobile device communication traffic has dramatically increased recently due to  mobile data, which in turn heavily overloaded the underlying infrastructure. To decrease Base Station (BS) interaction, intra-cell communication between  local devices, known as Device-to-Device, is utilized for distributed data caching. Nevertheless, due to  the continuous departure of existing nodes and  the arrival of newcomers, the missing cached data may lead to permanent data loss. In this study, we propose and analyze a class of LDPC codes for distributed data caching in cellular networks. Contrary to traditional distributed storage, a novel repair algorithm for LDPC codes is proposed which is designed to exploit the minimal direct BS communication. To assess the versatility of LDPC codes and establish performance comparisons to classic coding techniques, novel theoretical and experimental evaluations are derived. Essentially, the theoretical/numerical results for repair bandwidth cost in presence of  BS are presented in a distributed caching setting. Accordingly, when the gap between the cost of downloading a symbol from  BS  and  from other local network nodes is not dramatically high, we demonstrate that LDPC codes can be considered as a viable fault-tolerance alternative in cellular systems with caching capabilities for both low and high code rates.

\end{abstract}


\vspace{0.5cm}
\begin{IEEEkeywords}
Network coding, cellular systems, LDPC, coded caching, distributed systems, 5G and beyond
\end{IEEEkeywords}

%
\IEEEpeerreviewmaketitle

\section{Introduction}
In cellular networks, some of the cached files may be stored in a nearby Base Station (BS) as well as in the local memory of end-users. In the latter scheme, Device to Device (D2D) communication can be utilized as an alternative to retrieving data directly from a BS \cite{Asadi} which has been deemed to be essential for 5G and beyond technologies \cite{marsch20185g}. In this communication model, the file is cached in a set of distributed devices and these devices can communicate directly with each other without any BS involvement. This D2D technology not only decreases the communication burden between the BS and distributed nodes but also reduces the access time for retrieving the popular content. Therefore, whenever a mobile device wishes to retrieve data content, it is primarily requested from the nearby devices' caches through D2D communication. In case of data unavailability, nodes request it from the BS at the expense of increased download cost. Thus, if the content is already available on the nearby devices, the content can be accessed without contacting the BS and thereby reducing the traffic and overall access time. The client device may also cache the downloaded data from the BS for further requests that may come from other nearby nodes within the same cell at the expense of using more local storage space.

{The caching strategy in cellular devices is  divided into two major categories based on whether the file is stored in uncoded form \cite{Maddah, shanmugam2013femtocaching} or in coded form \cite{pakkonen2013,Paakkonen15,pedersen,Ji2016}. Regardless of the storage strategy, the data stored in a device can be considered lost whenever that device departs from the cell. One way to prevent the loss of data is the addition of controlled redundancy to the cached content.  This redundancy may simply be achieved for the uncoded data by using traditional replication, i.e., creating multiple identical copies. As for the latter case, a few erasure coding strategies can be used, such as traditional Reed-Solomon codes  (RS) \cite{ReedSolomon}, Locally Repairable Codes (LRC) \cite{locallyrepairablecodes}  and regenerating codes \cite{DimakisGWWR10} which  have already been  used  in the classical distributed storage systems.}

 While the data is stored in the cell, node departures may lead to reduced redundancy and hence the vulnerability for the stored content. To maintain a target level of reliability, data stored on departing nodes need to be regenerated either exactly or functionally within the cell. In the literature, there is a myriad of studies on repairing lost data due to node departures and permanent data loss.  Previous studies generally focus on caching the uncoded data  \cite{Maddah}. However, the most recent studies \cite{pakkonen2013,Paakkonen15,pedersen,Ji2016} mainly analyzed the performance of coded caching systems in which the file is first encoded using a classical erasure code and distributed to the network nodes in the cell. These studies reveal that classic erasure correcting codes (ECC) surpass the other network coding techniques which are specifically designed and optimized for distributed data storage on wired links.


To generate redundancy, generally, a file is first {chopped} into $k$ data symbols which are then encoded into $n$ symbols. When the traditional erasure codes are used, an increase in the values of the parameters $n$ and $k$ leads to higher node repair and decoding complexities which is rarely the case for LDPC codes \cite{Yongmei2015}. 
In addition, these codes can  be a suitable choice for dynamic systems such as cellular networks in which the number of incoming and outgoing nodes is hard to predict. {Hence, LDPC codes become an attractive candidate for distributed caching in cellular networks.} However, to the best of our knowledge, there is no work in the literature on any LDPC code family used  {in the context of} coded caching and in presence of a BS.

{The primary {motivation} in this study is to present that it is possible to reduce repair bandwidth cost using LDPC codes in presence of base station/s. In other words,  utilization of the base station to repair a lost node using LDPC codes  as a coded caching mechanism is one of the noteworthy {contributions} of this study.} {One of the objectives is to improve the data repair functionality of existing LDPC codes such as \cite{Eleftheriou2002} in presence of BS, which was originally proposed for reliable communications. {In particular, LDPC codes are essentially used to provide a reliable communication, with these codes} lost symbols due to communication channel problems are attempted to be recovered. However,  in the coded caching systems, symbol losses may occur due to node departures or node failures. Besides, the existence of a base station changes the nature of the traditional decoding process of LDPC codes. The communication cost of the base station as well as local nodes need to be jointly considered for an optimal system operation. }

To this end, in this paper, we analyzed the node repair cost of a distributed caching system in terms of downloaded symbols from the BS as well as from the neighboring devices in which LDPC codes are used as the fault tolerance mechanism. Furthermore, we presented a novel node repair algorithm tailored to LDPC\footnote{By that rationale, to other sparse graph codes such as fountain codes as well.} codes.  In this study, a specific array LDPC code family given in \cite{Eleftheriou2002}  is used due to its efficient encoding, and its {recovery} performance is compared with that of classic  RS codes \cite{ReedSolomon}, Minimum Bandwidth Regenerating (MBR)  and Minimum Storage Regenerating (MSR) codes \cite{DimakisGWWR10, EMSRHighRate, Rashmi11} using different rates  {which are the well-known and generally accepted codes among their contemporaries.} Our contributions can be summarized as follows:
\begin{itemize}
   
   \item A novel repair algorithm is proposed to reduce the total number of symbols downloaded from the BS using LDPC-based distributed caching.
   \item  The theoretical bounds are derived for the average required number of downloaded symbols from the BS as well as the distributed network nodes in the repair process.
   \item The proposed algorithm and the well known methods in the literature using RS, MBR and MSR codes are numerically compared in a carefully designed simulation environment in terms of the number of downloaded symbols from the BS as well as the helper nodes.
   \item  {We demonstrate the performance gap between the optimum repair strategy achieving the minimum bandwidth cost in LDPC-based distributed caching and our proposed algorithm by exhaustive search.}
\end{itemize}

{Our simulations demonstrate} that the lowest BS communication is achieved by RS and MBR codes with rates $1/2$ and $3/4$, respectively. The lowest D2D communication {occurs when LDPC and MSR codes are employed for the rates  $1/2$ and $3/4$.}  By and large, LDPC codes outperform MSR codes in terms of BS usage both for low and high code rates. { Besides, when  the ratio of {the cost of downloading a symbol from the BS over downloading it from other network nodes} is not very high, LDPC codes can be a very reasonable choice due to their low encoding, decoding (reconstruction) complexities \cite{Eleftheriou2002} and their low storage overhead.}  

The rest of the paper is organized as follows. The related work on coded caching systems employing different erasure codes is summarized in Section II. In Section III, the system model is elaborated. Moreover, the bandwidth costs for regenerating a lost node, which may have multiple symbol failures are derived. In Section IV, an algorithm, named Greepair, for repairing a lost node for generic LDPC code ensambles is elaborated. {In Section V, we present a number of evaluations of the proposed method, which include  notes on comparisons in terms of storage overhead and node repair complexity as well as simulation results. We also provide theoretical support to validate our numerical findings.} Finally,  we conclude our paper in Section VI.

\section{Related Work}
\subsection{Distributed Coded Caching}
The caching methods may differ in the way the data files are stored. In some of these methods, such as found in \cite{Maddah,shanmugam2013femtocaching},  data is stored in uncoded form while in some other forms such as found in {\cite{Paakkonen15,pedersen,pedersen2018optimizing,piemontese2018mds,calis2016maintenance,paakkonen2018coded,li2018double}} data is stored in a coded form using erasure codes in order to secure low storage overhead. However, minimizing the maintenance cost for node departures and the storage cost are amongst the main objectives in all of the past research work.

In cellular networks,  popular media content is repeatedly requested by many users simultaneously, which plays the key reason behind increased data traffic across the network. This issue can be addressed by a distributed storage system based on intelligent caching techniques. To reduce the popular video traffic in wireless networks, the concept of femtocaching was proposed and analyzed ~\cite{shanmugam2013femtocaching, golrezaei2012femtocaching}. In this architecture, small cell access points store the relevant video content in their local caches. The overhead on these points may exceed the data transfer capacity and lead to a bottleneck.  The helpers are equipped with low backhaul capacity and high storage capability to reduce the probability of system bottleneck. The cached data are either kept in uncoded or coded form.

{In  \cite{pakkonen2013,Paakkonen15}, the erasure codes were compared in terms of maintenance costs due to node departure(s) or node failures that  typically occur  in any coded caching system.} Among these studies, particularly in \cite{pakkonen2013},  simple caching is exercised in a setting in which the retrieved file is cached as is as well as in a coded form using two-way replication and regenerating codes. These mechanisms are compared in a simulation setting whereby mobile devices enter in and out of the cell according to the Poisson random process, and the node repairs are performed instantaneously. Moreover, a novel method is proposed to choose either simple caching or {redundant caching} for minimizing energy consumption. {In \cite{Paakkonen15}, the performance of regenerating codes is evaluated in terms of total energy consumption occurred in the node repair process of  the coded caching systems.} Unlike the earlier studies, it is assumed that nodes have unlimited storage capacities. Accordingly, {it is demonstrated that} the coded caching outperforms the uncoded version in terms of energy consumption.  

In \cite{pedersen}, the repair process is initiated repeatedly at periodic intervals. This \textit{lazy} repair process provides a more realistic scenario so that it takes into account live mobile devices as well as the BS within the same cellular network. Accordingly, it was observed that the regenerating codes, which are optimized for low bandwidth usage, did not demonstrate the same performance in such a lazy repair scenario. In fact, MDS codes, which require extensive repair bandwidth, were shown to perform better in terms of bandwidth consumption in a distributed caching setting.

In a more recent study \cite{paakkonen2018coded}, several different caching strategies were investigated; (1) \textit{simple caching} in which only one node in the cell stores a full copy of the file and (2) \textit{caching with redundancy} in which  node(s) requesting the file also cache it using \textit{replication} and \textit{regenerating codes} (MBR and MSR) \cite{DimakisGWWR10}. In that study, the content caching performance of coded and uncoded systems is compared in terms of communication cost. In their system model, it is assumed that the immediate repair is processed upon any node failure in order to keep the number of storage nodes constant. It is found that the multicaching method outperforms the other caching methods only if the file is popular, whereas the replication achieves the best performance when the file is not so popular. Finally, it is concluded that D2D caching is not recommended for the cases where the file is unpopular and storage costs are very high. {We provide Table \ref{tab:1} that systematically classifies aforementioned studies on the basis of coded caching paradigm.}
\begin{table*}[h!]
 \caption{ A classificiation of studies for caching  in mobile networks }
\begin{tabular}{ | p{0.14\textwidth} || p{0.2\textwidth} || p{0.20\textwidth}||p{0.09\textwidth}  ||p{0.20\textwidth} |}
\hline
\textbf{Coding System }& \textbf{Caching Method }& \textbf{Repair/{Decode} Strategy} & \textbf{Stored File} &  \textbf{Aim}\\
\hline
\hline
Maddah et al.  (2013) ~\cite{Maddah} & Uncoded caching & NA & Multiple files & Comparing Global caching vs. local caching \\
\hline
Shanmugam et al. (2013)~\cite{shanmugam2013femtocaching} & Uncoded caching and fountain coded caching & NA &  Multiple files & Minimizing delay through optimum file placement
\\\hline

Paakkonen et al. (2013)~\cite{pakkonen2013} & Simple Caching, {redundant caching} & Instantaneous Repair & Single File  &  Minimizing energy consumption \\ 
\hline
Paakkonen et al. (2015) ~\cite{Paakkonen15} & Simple caching, replication, regenerating codes &  Instantaneous Repair &Single File & Minimizing energy consumption\\\hline
 Pedersen et al. (2016) ~\cite{pedersen}.  & MDS,  replication, regenerating codes, LRC codes & Lazy Repair & Single File  & Communication Cost Analysis  \\ 
\hline

 Pedersen et al. (2018)  {~\cite{pedersen2018optimizing}} & Optimized MDS erasure codes  & Performed if it is necessary when a node request
a file. & Multiple Files & Optimizing caching of content  \\\hline

 Paakkonen et al. (2019)~\cite{paakkonen2018coded} & Simple caching, regenerating codes, Multicaching  & Instantaneous Repair &Multiple Files & Minimizing  communication and storage cost 
 \\
 
 \hline
Li et al. (2018)  ~\cite{li2018double} & Double replication MDS codes & Lazy Repair & Single File & New hybrid code scheme for caching   
 \\
\hline

\end{tabular}

\vspace{-4mm}
\label{tab:1}
\end{table*}

Replication, RS and regenerating codes are applied in many system settings and configurations to secure optimal storage overhead and maintenance costs. To our knowledge, there is no study on the use of sparse graph codes (in particular LDPC codes) for cellular systems, which are mainly constructed for use in the physical layer of noisy communication systems \cite{5GPhysical}. In fact, such codes possess sparse parity check matrices which make them an appealing candidate due to their low decoding (reconstruction) complexity and inherent locality properties. In this study, we observe that these sparse graph codes can be used as an alternative to existing techniques with novel modifications to the traditional decoding/repair processes.




\subsection{ LDPC Codes for Classical Distributed Storage}

 {All the referred erasure codes have been used as a fault tolerance mechanism for distributed storage systems before being applied in the coded caching systems. For instance,  LDPC codes are used for distributed data storage in \cite{wei2014auto}}. In that study, the lost symbols are repaired one by one using different LDPC codes. 
This repair process was compared with replication, RS, and LRC codes \cite{locallyrepairablecodes} in terms of system resource consumption.  According to the presented results, it was observed that the latency due to encoding and decoding processes is far lower than that of RS codes. The trade-off between key criteria such as repair bandwidth, reliability, and storage overhead used in the performance evaluation process of LDPC codes is also partially analyzed in \cite{park2017ldpc}. Unlike RS codes, it is shown that the repair bandwidth of LDPC codes does not change even if the code blocklength increases. According to this study, {right-}regular {(regular check node degree)} LDPC codes are optimal in an average sense and shown to provide lower bandwidth and higher reliability than RS codes.





The study in \cite{wei2014auto}  focuses on repairing lost symbols one at a time (\textit{single symbol repair}) while in a later study \cite{Yongmei2015}, the repair process of  multiple lost symbols all at once named as (\textit{multiple symbol repair}) using the generator matrix's  linearly independent rows  is studied. When there exists an exceeding number of symbol failures, multiple symbol repair {cannot} be practical due to matrix inversion requirements. In the single symbol repair,  \textit{recovery equations} are constructed as follows. A row of the parity check matrix of the LDPC code is selected in which the column corresponding to the lost symbol value is nonzero. Preferably, the row with the minimum row weight is chosen in order to use the lowest bandwidth possible. Then, the symbols corresponding to ones in the selected row but the lost symbol should be collected from the other network nodes. The addition of these symbols over the binary field constitutes the recovery equation of the lost symbol \cite{Yongmei2015}. 
 
  

\section{System Model} \label{sec:3}
Let a cell contain a total of $N$ network nodes at the onset subject to the arrival and the departure of new nodes over time. All nodes in the cell can communicate with each other without BS assistance. For simplicity, let us assume that a single file of size $F$ {symbols} is cached by a total of $m \leq N$ nodes. This file is initially partitioned into $k$ equal size data chunks to be erasure encoded systematically into $n > k$ equal size chunks {when LDPC and RS codes are used}. {When the regenerating codes are used, a file having $F$ symbols are partitioned into pieces having $B$ symbols, and each piece is encoded into $n$ packets such that each of them has $\alpha$ symbols}. Finally, these encoded symbols/packets are distributed over $m$ nodes with $m \leq n$. Each storage node stores exactly the same number of symbols i.e., $n/m$ symbols/{packets} per stripe and $m \mid n$, unless LDPC codes are utilized. In the case of LDPC, nodes  store $\lceil n/m \rceil$ symbols due to the construction constraints as described in \cite{Eleftheriou2002}.  In addition, the raw $F$ {symbols}, as well as the generated redundancy, are stored in the BS, whereas newcomer nodes store no related information regarding the cached file. A summary of the used notation throughout the paper is provided in Table \ref{tab:nomenclature}.
\subsection{Arrival-Departure Model}
Let us define $N_{\mathrm{in}}$ to be current number of nodes in the cell. At the beginning of the operation there exist $N$ nodes, {i.e, $N_{\mathrm{in}}=N$}. However, $N_{\mathrm{in}}$ may change over time due to node arrivals and departures.  It is assumed that these nodes arrive at and depart from the cell according to the Poisson process as assumed in the past literature \cite{pakkonen2013,pedersen}. Our simulation model is constructed according to M/M/$\infty$ queuing model as in \cite{pedersen},{\cite{Pourmandi21}}. {Similar to the model described in \cite{pedersen}},  nodes enter the cell  based on  a Poisson process with  independent and identically distributed (i.i.d.) exponential random inter-arrival times given by the probability density function (pdf) {$f_T(t) =  N\lambda e^{-N\lambda t}$}. Likewise, each node leaves the cell at a rate $\mu$ {(per node)}  and {they are in the cell during } i.i.d. exponential random period given by the pdf $f_T(t) = \mu e^{-\mu t}$ as in \cite{pedersen}.
\begin{figure}[!htb]
    \centering
    \includegraphics[width=0.55\textwidth]{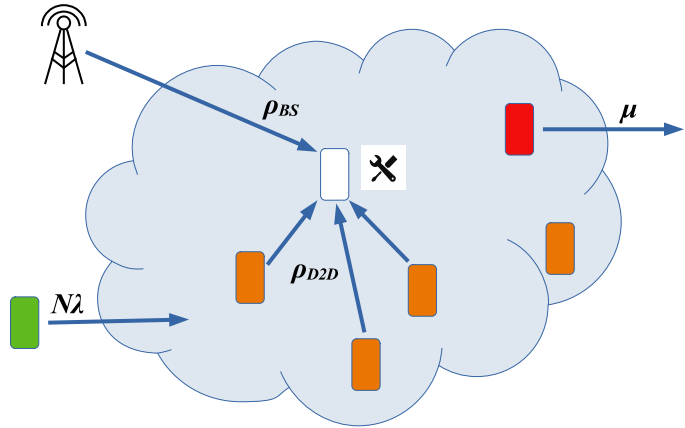}
    \caption{{A wireless network where distributed storage is implemented by mobile devices (nodes).} 
    }
    \label{fig:mesh1}
\end{figure}
 Hence, the probability of the existence of $x$ nodes at a given time can be described by $p({N_{in}=x})=\frac{{({N\lambda}/{\mu})}^{x} e^{-(\frac{N\lambda}{\mu})}}{x!}$ {as stated in \cite{pedersen}}. In order to maintain an average number of $N$ nodes over time,  $\lambda$ is typically selected to be equal to  $\mu$. The typical system model is given in Fig. \ref{fig:mesh1}.

\subsection{Node Repair}
Whenever a storage node leaves the cell, the stored content in that node may be lost unless a precautionary protocol allows pre-replication of the content to another empty node in the cell. To keep the same reliability level of the cached content, the lost data must be repaired. However, in our system model, instead of repairing the lost content immediately, the repair processes are initiated periodically, known as \textit{lazy repair},  at the end of a time interval $\Delta$  \cite{totalrecall}.  Moreover, we assume that there is no node departure or arrival during repair operations. To repair the lost storage content on the departed node, an empty node is randomly selected from the cell and execute a systematic repair process {till it becomes} part of the set of storage nodes. Since the departing node may store more than one symbol due to $ \lceil n/m  \rceil> 1$, repairing the lost content may lead to multiple  symbol recovery operations. Besides, multiple node departures may occur simultaneously within $\Delta$ time. It is assumed that the nodes carrying out the repair operations in the same period do not communicate with each other in order to perform the repair in a totally distributed manner.
\subsection{Analysis of Repair Cost}
In this study, the primary objective is to minimize the cost of node repair bandwidth {according to  the type} and the construction of an error correction code. Thus,  the cost of downloading one symbol from any storage node (represented by $\rho_{\mathrm{D2D}}$) and the cost of downloading one symbol from the BS (represented by $\rho_{\mathrm{BS}}$)  are the main parameters of interest.  According to path loss laws of wireless channels, it is reasonable to assume that the cost of downloading one symbol from a device is lower than that of downloading it from a BS, i.e., $\rho_{\mathrm{D2D}} < \rho_{\mathrm{BS}}$ \cite{Paakkonen15}.  {Moreover, the cost of BS communication can vary dynamically due to a number of reasons such as the frequency of data access  requests, available physical and network conditions, etc. Thus, the ratio $\frac{\rho_{\mathrm{BS}}}{\rho_{\mathrm{D2D}}}$  may not be exactly known prior to the node repair process. {Hence}, it is assumed that the ratio $\frac{\rho_{\mathrm{BS}}}{\rho_{\mathrm{D2D}}}$  is typically higher than the  lower bound at which D2D enabled repair cost is always less costly compared to downloading lost symbols directly from the BS. For instance in a {$(d_v,d_c)$ regular} LDPC codes, this lower bound would be $d_c-1$ due to the inequality $(d_c-1)\rho_{\mathrm{D2D}}<\rho_{\mathrm{BS}}$. Therefore, it is assumed that the node repairs are done with the minimum the BS involvement for all the erasure codes referred below. }
\begin{table*}[!t]
    \centering
    \caption{A summary of symbols/definitions}
    \begin{tabular}{ | p{0.055\textwidth} || p{0.38\textwidth} || p{0.055\textwidth}||p{0.38\textwidth} |}
   \hline
       \textbf{Symbol} & \textbf{Definition} & \textbf{Symbol}& \textbf{Definition} \\    \hline
          \hline
        $m$ & Number of storage nodes within the cell &$N$  & {The initial number of nodes in the cell} \\ \hline  
        $\rho_{\mathrm{D2D}}$& Cost of downloading a symbol from other nodes&$\rho_{\mathrm{BS}}$ & Cost of downloading one symbol from BS \\ \hline $N_{\mathrm{in}}$& Total number of nodes in the cell & $a$&  The  number of  repaired symbols of the lost node\\\hline 
        $F$& Original file size & $l$& Total number of lost symbols in the system\\\hline 
        $\Delta$ & Time interval between two repairs  & $p$ & Probability of a node staying in the cell \\\hline 
        $B_1$ & Number of message symbols to be encoded in one stripe for MBR codes& $B_2$ & Number of message symbols to be encoded in one stripe for MSR codes \\ \hline $t^z$ & Number of  symbols stored in a node using high rate MSR codes & $q(a)$ & Defines the probability of a symbol loss where $a$ symbols have just been repaired in LDPC codes. \\
          \hline
    \end{tabular}
    \label{tab:nomenclature}
\end{table*}
In order to theoretically support our numerical results, to be discussed in Section \ref{sec:sims}, we have presented four cost functions. Namely, we  specified the number of symbols downloaded from the BS  and from the neighboring nodes in the repair process of a lost node for RS \cite{ReedSolomon}, MBR \cite{Rashmi11}, MSR \cite{Rashmi11,EMSRHighRate} and LDPC codes \cite{Eleftheriou2002}.  The outputs of the cost functions define the bandwidth cost for repairing the content of only one single lost node when there may exist one or multiple lost nodes. We set $ n/m$  symbols/packets to be stored in each storage node that takes place in the systems where RS, MSR, and MBR codes are employed whereas $\lceil n/m \rceil$ symbols are stored in a storage node for the systems using LDPC codes. {Let us analyze} the cost functions of each one of these codes {next}. 
\subsubsection{RS codes}
When there are $l$ lost symbols in the system and the number of alive symbols satisfies the condition $n-l \geq k$,  the contents of the lost node can be repaired through decoding by downloading $k$ symbols from the other devices as specified in \cite{Erdi2020}. On the other hand,  if the number of symbols in all alive nodes is not sufficient to complete the repair operation successfully i.e., if the inequalities $n-l < k$ and $k-n+l < n/m$ hold, the repair operation would require downloading symbols not only from these devices but also from the BS. {In this case, the total number of $n-l$ symbols in the alive node is insufficient for any lost symbol repair. Thus, additional download of $k - n+l$ symbols is required from BS to repair $n/m$ lost symbols as in \cite{Erdi2020}. Since the second condition satisfies the inequality $k-n+l < n/m$, in addition to $n-l$ D2D symbols download, $k-n+l$ symbols are downloaded from the BS  to be able to repair $n/m$ symbols. {If $k-n+l \geq n/m$, it is sufficient to download the original lost $n/m$  symbols directly from BS without the need for a D2D communication as seen in \cite{Erdi2020}.}} {In this case, if both BS and local nodes were used, $n-l$ helper symbols from alive nodes and $k-n+l$ helper symbols from the BS would be downloaded, which is always more costly than downloading the lost $n/m$ symbols directly from the BS.} Thus, the cost function for regeneration of a lost node using  RS codes can be given as
\begin{eqnarray}
C_{\mathrm{RS}}(F,m,n,k,l)
=\begin{cases} F\rho_{\mathrm{D2D}}, &\mbox{if } k \leq n-l \\
\frac{F}{k}(k-n+l)\rho_{\mathrm{BS}}+\frac{F}{k}(n-l)\rho_{\mathrm{D2D}}, & \mbox{if } n-l < k < \frac{n}{m} + n - l \\ 
\frac{n}{m}\frac{F\rho_{\mathrm{BS}}}{k}, &\mbox{if } k \geq \frac{n}{m} + n - l \label{eq:crs}\end{cases}
\end{eqnarray}
where $l$ represents the number of lost symbols in the system. Assuming independence, the probability of having {$i$ storage nodes in the cell during $\Delta$ interval can be given by the binomial distribution $b_i(m,p)={m  \choose i} p^i{(1-p)}^{m-i}$ for $0\leq i \leq m$ as assumed in \cite{pedersen}}.  Hence, the total expected cost of repairing lost nodes within $\Delta$ time, i.e., the expected communication cost required by repairing a lost node for the interval $\Delta$ can be simply  expressed as,
\begin{eqnarray}
\mathbb{E}[C_{\mathrm{RS}_{\Delta}}(F,m,n,k)]
= \sum_{i=0}^{m} {m\choose i} p^i (1-p)^{(m-i)} {C_{\mathrm{RS}}} (F,m,n,k,\frac{n}{m}(m-i)) 
\end{eqnarray} 
where the probability of a node remaining in the cell within $\Delta$ is given by $p=e^{-\mu \Delta}$.

\subsubsection{MBR codes}
{A storage node stores $n/m$ packets each consisting of $\alpha$ symbols when regenerating codes are used. Moreover, $\alpha=d$ for the MBR codes as specified in \cite{Rashmi11}.} In the case of MBR codes, since the regenerated {packets (due to the sub-packetization including symbols)} of a lost node can help with the subsequent repair operations of the same lost node, the repair bandwidth differs substantially from that of RS codes. Here, the communication pattern is formed with respect to the relations among $d,n,l$ as in \cite{Erdi2020}. We define $a$ as the number of packets already regenerated during the lost node repair and assume that the inequalities $d>n/m\geq a$ hold. When the number of symbols required to achieve a single node repair, $d$, satisfies the inequality $d \leq n - l$, there would be a sufficient number of symbols in alive devices.
{In this case, to repair the first lost packet, $d$ symbols should be downloaded from other nodes. To repair the second lost packet, the first packet now can be used as a helper  since it is already repaired. Thus, downloading only $d-1$ symbols is sufficient from other nodes to repair the second packet. In such a sequential manner, all $n/m$ packets can be repaired, whose communication cost can be seen in the second case of Eq. (\ref{eq:mbr}).}
On the other hand, when $d>n - l$, the repair operations are carried out by using both BS and other alive nodes. {Here, to repair a lost packet, a total of $n-l$ symbols are downloaded from other nodes, whereas $d-n+l-a$ symbols are downloaded from BS. When $a=d-n+l$, the need for BS communication can be completely eliminated using previously repaired symbols. This can be observed in the first case of  Eq. (\ref{eq:mbr}).}  Briefly, when an $(n,~k,~d)$ MBR code is used in the system, the communication cost of repairing all $n/m$ packet of a lost node,  can be expressed as,
\begin{align} \label{eq:mbr}
 C_{\mathrm{MBR}}(F,m,n,k,d,B_1,l)=  \begin{cases} 
\sum_{a=0}^{d-n+l-1}\Bigl (\frac{F}{B_1}\rho_{\mathrm{D2D}}(n-l)\dots \\  +\frac{F}{B_1}\rho_{\mathrm{BS}}(d-n+l-a) \Bigr)\dots \\  + \sum_{a=d-n+l}^{\frac{n}{m}-1}\frac{F}{B_1}\rho_{\mathrm{D2D}}(d-a), & \mbox{if }  ~d>n-l \\
\sum_{a=0}^{\frac{n}{m}-1}{\frac{F}{B_1}\rho_{\mathrm{D2D}}(d-a)} &\mbox{if }  ~d\leq n-l \} 
\end{cases}
\end{align}
where $B_1= kd-{k \choose 2}$ to achieve the equality $\beta = 1 $ as given in \cite{Rashmi11}. As a result, the expected cost of repairing a lost node during $\Delta$ interval can be given by the equation given below:
\begin{align}
\mathbb{E}[{C_{\mathrm{MBR}_{\Delta}}}(F,m,n,k,d,B_1)] 
=\sum_{i=0}^{m} {m\choose i} p^i (1-p)^{(m-i)} {C_{\mathrm{MBR}}}(F,m,n,k,d,B_1,\frac{n}{m}(m-i))
\end{align}

\subsubsection{MSR codes}  
In the case of MSR codes, the cost function is heavily dependent on the rate and construction of the code.  Initially, a type of low-rate MSR codes is considered based on \cite{Rashmi11}, which are defined for the parameters $[n,k,d\geq  2k-2]$, $\beta = 1$, { $\alpha=d-k+1$}  and $B_2=k(d-k+1)$ as the file size. Assuming $d>n/m\geq a$, {the repair cost function for such low rate MSR codes, namely $C_{\mathrm{MSR}_{\mathrm{LR}}}(F,m,n,k,d,B_2,l)$ is given below. 
\begin{equation} \label{eq:newMSR}
C_{\mathrm{MSR}_{\mathrm{LR}}}(.)=\begin{cases} 
\sum_{a=0}^{\frac{n}{m}-1}{\frac{F}{B_2}\rho_{\mathrm{D2D}}(d-a)}, &\mbox{if }  ~d\leq n-l, \\
\Bigl ( \sum_{a=0}^{d-n+l-1} (\frac{F}{B_2}(\rho_{\mathrm{D2D}}(n-l)+\rho_{\mathrm{BS}}(d-n+l-a)))\mathbbm{1}_A \dots \\ + (\frac{F}{B_2}\rho_{\mathrm{BS}}(d-k+1)){(1-\mathbbm{1}_A)}\Bigr ) \dots\\
+ \Bigl ( \sum_{a=d-n+l}^{\frac{n}{m}-1}\frac{F}{B_2}\rho_{\mathrm{D2D}}(d-a)\Bigr) & \mbox{otherwise }  
\end{cases}
\end{equation} where $A$ is the condition of $d-n+l-a<d-k+1$ and $\mathbbm{1}_A$ is an indicator function which is 1 if the condition $A$ is true, and otherwise it is 0. The first case defines the cost of repair when only local nodes are used, and the  second case defines the cost of using either both BS and local nodes or BS only.  Notice when $d-n+l-a\geq d-k+1$ and $d>n-l$, downloading the original lost symbols from BS is always less costly than using a hybrid communication model, (i.e., involving both BS and the local nodes at the same time). Thus, the second line of the second case defines the cost of downloading the original lost symbols (in a packet) directly from the BS, which is similar to the third case of $C_{\mathrm{RS}}(.)$.} The  expected cost {expression} of low rate MSR codes is omitted due to its similarity with $\mathbb{E}[ {C_{\mathrm{MBR}_{\Delta}}}(F,m,n,k,d,B_2)]$.

 The cost function of high rate MSR codes is {based on the construction given in \cite{EMSRHighRate} where the codes are not only high rate but also long-length and systematic.} In this case, the costs are derived according to whether the lost nodes are systematic or non--systematic {due to the constructions specified in \cite{EMSRHighRate}}. Besides, $n/m=1$ is assumed.  The  repair  cost function of systematic nodes   for  the  high  rate  MSR  codes is given by
{
\begin{align}\label{eq:MSRHR} C_{\mathrm{MSR}_{{\mathrm{HR}}_{\mathrm{sys}}}}(F,z,n,d,t,B_2,l)= \begin{cases} 
 t^{z-1}d\frac{F}{B_2}\rho_{\mathrm{D2D}}, & \mbox{if }  ~d \leq n-l , \\
  t^{z-1}(n-l)\frac{F}{B_2}\rho_{\mathrm{D2D}}+ (d-n+l)t^{z-1}\frac{F}{B_2}\rho_{\mathrm{BS}}, &\mbox{if }  ~d-n+l < t, \\
 t^{z}\frac{F}{B_2}\rho_{\mathrm{BS}}, &\mbox{otherwise}
\end{cases}
\end{align}
{As in RS codes, the cases represent node repairs  using only local nodes, using both nodes and the BS and using the BS only, respectively.}} {{The repair operation of a non--systematic node is carried out entirely with  the BS communication. Thus, $C_{\mathrm{MSR_{{HR}_{non-sys}}}}(F,z,n,d,t,B_2,l)$  is equal to the third case of Eq. (\ref{eq:MSRHR}) where $n = (t+1)z+t,d = n-1,k = (t+1)z, \alpha=t^z$, $B_2=kt^z$ for $t, ~z\in \mathcal{Z}^+$ as given in \cite{EMSRHighRate}}.}
Thus, the expected cost per lost node during $\Delta$ for the systems using high rate MSR codes is given by
\begin{align}
\mathbb{E}[ {C_{\mathrm{{MSR}_{HR}}_{\Delta}}}(F,z,n,d,t,B_2))] &= \sum_{i=0}^{m} {m\choose i} p^i (1-p)^{(m-i)} \Bigl( \frac{k}{n}C_{\mathrm{MSR_{{HR}_{sys}}}}(F,z,n,d,t,B_2, (m-i)) \nonumber \\  & \ \ \ \ \ \ \ +\frac{t}{n}C_{\mathrm{MSR_{{HR}_{nonsys}}}}(F,z,n,d,t,B_2, (m-i)) \Bigr)
\end{align}

\subsubsection{LDPC codes}
For LDPC codes, the expected cost function for {repairing} a lost node using a $(d_v,d_c)$ regular LDPC code can be upper bounded by,
\begin{align}
\mathbb{E}[{C_{\mathrm{LDPC}}}(F,m,k,n,l)] &\leq  
       \sum_{a=0}^{{\lceil \frac{n}{m} \rceil }-1 }  \left ( (d_c-1)d_v{(1-q(a))}^{d_c-1}\frac{F}{k}\rho_{\mathrm{D2D}}+ {(1-(1-q(a))^{d_c-1}})\frac{F}{k}\rho_{\mathrm{BS}}\right ) \\
       & = \sum_{a=0}^{{\lceil \frac{n}{m} \rceil -1} } \frac{F}{k}  \left (  \rho_{\mathrm{BS}} + (1-q(a))^{d_c-1} \Big((d_c-1)d_v\rho_{\mathrm{\mathrm{D2D}}} - \rho_{\mathrm{BS}}\Big) \right)
\label{eq:expectedLDPC1}
\end{align}
where $q(a)=\frac{l-a}{n-a}$ is the loss probability of a symbol with $a \leq l$ and $l$ can be derived using $b_i(m,p)$. For the regular LDPC codes, any recovery equation has at most $d_c-1$ terms and there exist $d_v$ alternatives for one lost symbol. Thus, if all of  $d_c-1$ symbols can be retrieved from the alive node(s), the lost symbol can be repaired with the cost $(d_c-1)\rho_{\mathrm{D2D}}$, {which can occur with probability  ${(1-q(a))}^{d_c-1}$. Since we have $d_v$ recovery equation alternatives, this probability can be loosely upper bounded by $d_v{(1-q(a))}^{d_c-1}$.}


Nevertheless, when there are no recovery equations whose remaining $d_c-1$ elements are all alive for a lost symbol, BS remains the only option for the repair. Notice that the worst case assumed in the derivation of the upper bound happens when none of $d_v$ equation {alternatives} can not repair the lost symbol.  Here, the probability of BS involvement can be upper bounded by ${(1-(1-q(a))^{d_c-1}})$. 
As the required symbols of recovery equations of different symbols may intersect, the cost function described in Eq. (\ref{eq:expectedLDPC1}) above gives an upper bound. Since the intersections between the recovery equations of different lost nodes rarely happen for long blocklengths, this bound becomes tighter as the blocklength increases.   Thus, the expected cost of repair for LDPC codes during the time frame $\Delta$ will be given by
\begin{align}\label{eq:LDPC}
  \mathbb{E}[{C_{\mathrm{LDPC}_{\Delta}}}(F,m,k,n)] &= \sum_{i=0}^{m}{m\choose i} p^i (1-p)^{(m-i)}\mathbb{E}[{C_{\mathrm{LDPC}}}(F,m,k,n,\frac{n}{m}(m-i))]. 
\end{align}





\comment{We also let the local repair time per lost symbol to be defined by $\tau$ and the weight parameters, which dictate the importance of the bandwidth and the time use, by $\omega_\beta$ and $\omega_\tau$, respectively, in which $ \omega_\beta+\omega_\tau=1$. Hereby, the unit cost of per symbol repair is therefore given by the following equation,
\begin{equation}\label{eq:costE}
  \mathcal{C}=  \frac{\omega_\beta(\rho_{\mathrm{BS}}\beta_{\mathrm{BS}}+ \rho_{\mathrm{D2D}}\beta_{\mathrm{D2D}})+ \omega_\tau\tau}{\Delta}. 
\end{equation}}

 \section{Problem Definition}
A single symbol repair operation of  LDPC codes substantially changes when nodes store multiple symbols as well as BS can be utilized by downloading some of the lost symbols on the downlink. For instance, let us consider a $(8,4)$ LDPC code (parity check matrix of which is given below) and the codeword symbols are distributed to four nodes such that $i$-th node caches the symbols $\{s_{(2(i-1)+1)}, s_{2(i-1)+2) }\}$. In addition, assume that there exists a BS that stores all the codeword symbols. Suppose now the first and the second nodes are lost. In an independent repair scenario, each node downloads helper symbols independent of each other.

\begin{equation}\label{eq:parityCheck2}
\textbf{H} = \begin{bmatrix}
       0 &0 &1& 0& 1& 1&0&0 \\
       1 &0 &1& 0& 0& 0&1&0\\
       1 &1 &0& 1& 0& 0&0&0\\
       0 &1 &0& 1& 0& 0&1&1
\end{bmatrix}
\vspace{0.4cm}
\end{equation}

To repair the lost data in the first node,  $\{s_1,s_2\}$ should be repaired.  { {Based on \textbf{H}}, There are two  recovery equation alternatives for repairing $s_1$ which are $s_1=s_3+s_7$ and $s_1=s_2+s_4$.  These equations can be represented by the set $r_{i,j}$ where $r_{i,j}$ represents the $j$-th recovery equation option for repairing symbol $s_i$. Here, $r_{1,1}=\{s_3,s_7\}$ and $r_{1,2}=\{s_2,s_4\}$}. The number of symbols downloaded from the BS would be $1$ if $r_{1,1}$ is used  and this number would be $2$ if  $r_{1,2}$ is used.
{Also, there are  two possible recovery equations to repair $s_2$  which can be identified by $r_{2,1}=\{s_1,s_4\}$ and $r_{2,2}=\{s_4,s_7,s_8\}$ whose BS communication would require retrieving $2$ and $1$ symbols, respectively}.  Let us choose $r_{1,1}$ and $r_{2,2}$ to repair $s_1$  and $s_2$,  respectively. Repairing $s_1$ and $s_2$ using these equations entails the download of symbols $s_3$ and $s_4$  from the BS. However, if $r_{2,2}$ and $r_{1,2}$ were used instead, downloading only $s_4$ from the BS would have been sufficient. Since $s_2$ is repaired first, it can be used to solve $s_1$ using $r_{1,2}$. As can be seen, the number of symbols downloaded from the BS is impacted by the order of recovery equations used. In fact, the problem of repairing lost nodes with the minimum BS interaction can be shown to be  a special case of an NP-Complete problem \cite{HaytaogluArxiv}. 

\subsection{ The Dynamic Minimum Weighted Set Cover Problem and LDPC Repair}

In this subsection, we show how to transform LDPC node repair process as a dynamic minimum weighted subset cover problem. Before the main discussion, four different problems, namely,  the minimum  subset cover problem, the minimum weighted set cover problem,  the minimum weighted subset cover and finally  the minimum dynamic weighted subset are described.

Let us commence with the variation of the minimum set cover problem called the minimum subset cover problem in which there exists $n$ different sets, $S=\{\xi_i \mid i \in Z^+ \ s.t. \ 1 \leq i \leq n \}$ where $\bigcup_i \xi_i ~(\forall ~\xi_i \in S)$ constitutes the universal set $U$.  The  minimum subset cover problem is about finding the minimum number of sets that cover the set $E$ where  $E \subseteq U$ \cite{Kleinberg}. 

Now let us consider a variation of the minimum set cover problem in which each of  $\xi_i$s is associated with a weight $w_i \in  \mathbb R^{+} \cup \{0\}  $. In this problem, it is required to find  a set of $\xi_i$s, namely $S'$ such that  $\bigcup_i \xi_i ~(\forall ~\xi_i \in S')$ constitutes the universal set $U$ as well as the total weight is minimum.  This problem is named as  the minimum weighted set  cover problem in the literature and it is shown to be NP-Complete \cite{Chvatal1979}.

The combination of the problems above leads to yet another problem which is named as \textit{the minimum weighted subset cover problem} where the selected sets should now cover $E \subseteq U$ instead of $U$ itself. Aparently, It is not hard to show that this problem  is also NP-Complete using similar arguments in \cite{Chvatal1979}.

A novel problem appears when the costs of the sets can be changed upon selecting any other set in $S'$. In this  problem, the weights of the sets are changed dynamically and contingent upon previous selections of $\xi_i$s i.e., the selection order of the sets affects the  costs of the unselected sets. More specifically, the initial weight for set $\xi_i$ is calculated as
$w_i=|\xi_i|+c_i,  c_i \in \mathbb{R}^{+} \cup \{0\}$, however, after any set $\xi_y$ is selected, the costs are updated as below.
\begin{eqnarray}
w_i=|\xi_i  \setminus  ( \xi_y \cap E^{\prime})|+c_i, \ \ \ c_i \in \mathbb{R}^{+} \cup \{0\},~E^{\prime}\subseteq U,~E \subseteq U  \label{eqnupdate}  
\end{eqnarray}

We name this particular variation of the \textit{minimum weighted subset cover problem} as \textit{the minimum dynamic weighted subset cover problem}. In essence, the main objective of \textit{minimum dynamic weighted subset cover problem} is to find an ordered set of  $\xi_i$s which would  cover the subset $E$ and whose total weight achieves the global minimum.

Before discussing the  \textit{minimum weighted subset cover problem}, let us recall the requirements of a problem to be NP-complete. 
\begin{theorem}
To prove that a problem $X$ is a member of NP-Complete problem set, three properties given below must hold at the same time. \\
 \begin{minipage}[t]{\linewidth}
\begin{enumerate}[I.]
    \item Prove $ X$ is in NP. 
    \item Choose a well known NP-complete problem Y. 
    \item Prove that Y $\leq_P X$   which means X is pollynomially reducible to  problem Y. 
\end{enumerate}
\end{minipage}
\end{theorem}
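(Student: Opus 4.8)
The plan is to observe that the statement is essentially an unpacking of the definition of NP-completeness together with one auxiliary fact: the transitivity of polynomial-time many-one reductions. Recall that a decision problem $X$ is \emph{NP-complete} precisely when (a) $X \in \mathrm{NP}$ and (b) $X$ is \emph{NP-hard}, meaning that every problem $Z \in \mathrm{NP}$ admits a polynomial-time reduction $Z \leq_P X$. Condition I is exactly (a), so the real work is to show that conditions II and III together deliver (b) once (a) is in hand.

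For the sufficiency direction I would first establish the lemma that $\leq_P$ is transitive: if $Z \leq_P Y$ via a map $f$ computable in time $p(\cdot)$ and $Y \leq_P X$ via a map $g$ computable in time $r(\cdot)$, then $g \circ f$ reduces $Z$ to $X$. Correctness is immediate from $z \in Z \iff f(z) \in Y \iff g(f(z)) \in X$; the only point needing care is the running time, where I would note that $|f(z)| \le p(|z|)$, so $g(f(z))$ is produced in time at most $r(p(|z|))$, which is again a polynomial in $|z|$. Granting this lemma, suppose conditions II and III hold with $Y$ a known NP-complete problem. Take any $Z \in \mathrm{NP}$; since $Y$ is NP-hard we have $Z \leq_P Y$, and condition III gives $Y \leq_P X$, so transitivity yields $Z \leq_P X$. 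Hence $X$ is NP-hard, and together with condition I this makes $X$ NP-complete.

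For the necessity direction — which explains why these are the natural conditions to check and not merely sufficient ones — I would suppose $X$ is NP-complete. Then $X \in \mathrm{NP}$ is condition I directly. For conditions II and III, I would invoke the existence of at least one established NP-complete problem: the Cook--Levin theorem supplies SAT, so take $Y = \mathrm{SAT}$; since $\mathrm{SAT} \in \mathrm{NP}$ and $X$ is NP-hard, we obtain $\mathrm{SAT} \leq_P X$, which is condition III.

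The only step carrying genuine mathematical content is the transitivity lemma, and within it the sole subtlety is confirming that the composition of two polynomial-time computations is again polynomial-time; this reduces to the elementary fact that a polynomial of a polynomial is a polynomial, combined with the bound $|f(z)| \le p(|z|)$ on the intermediate output. Everything else is a direct appeal to definitions, so I expect no real obstacle. I would also remark, for precision, that the clause ``$Y \leq_P X$ which means $X$ is polynomially reducible to problem $Y$'' is better phrased as ``$Y$ is polynomially reducible to $X$''; the reduction constructed in the sequel uses exactly this (correct) direction, so the slip is harmless.
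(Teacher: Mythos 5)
Your proposal is correct and follows the standard textbook argument (transitivity of polynomial-time reductions combined with the definition of NP-hardness), which is exactly what the paper's one-line proof defers to by citing Kleinberg and Tardos rather than writing out. Your side remark is also a correct catch: the clause in the statement should read ``$Y$ is polynomially reducible to $X$'' rather than the reverse, and your argument uses the reduction in the right direction.
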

\begin{proof}
The proof and details can be found in \cite{Kleinberg}.
\end{proof}
\begin{theorem}\label{teo:3}\textit{The minimum dynamic weighted subset cover problem} is an NP-Complete problem.
\end{theorem}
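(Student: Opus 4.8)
The plan is to follow the three-part recipe stated in the preceding theorem. \emph{Membership in NP} is the easy part: first phrase the problem in decision form (``is there an ordered selection of sets covering $E$ whose total dynamic weight is at most $W$?'') and take as a certificate the ordered list $\xi_{y_1},\dots,\xi_{y_t}$ of chosen sets. A verifier simply simulates the weight updates of Eq.~(\ref{eqnupdate}) in this order — at step $j$ it already knows the covered portion $\bigcup_{r<j}(\xi_{y_r}\cap E')$, so it can read off the current weight of $\xi_{y_j}$ and accumulate it — and then checks that $\bigcup_r \xi_{y_r}\supseteq E$ and that the accumulated weight does not exceed $W$. All of this runs in time polynomial in the instance size, so the problem lies in NP.

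For hardness I would reduce from the \emph{minimum weighted subset cover problem}, shown NP-complete just above (equivalently, from classical minimum weighted set cover \cite{Chvatal1979}). The idea is that the static problem is essentially the special case of the dynamic one in which the updates never fire. Given a static instance with sets $\xi_1,\dots,\xi_n$, target region $E$, and weights $w_1,\dots,w_n$, first clear denominators so the $w_i$ are positive integers (any set of weight $0$ is discarded after greedily including it and deleting its elements from $E$, which never hurts). Let $M=\max_i|\xi_i|$, keep the same sets and the same target $E$, and — the crucial move — set $E'=\emptyset$ together with $c_i = M w_i - |\xi_i|\ge M - |\xi_i|\ge 0$. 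Since $\xi_y\cap E'=\emptyset$ for every $y$, the update rule in Eq.~(\ref{eqnupdate}) leaves every weight pinned at its initial value $|\xi_i|+c_i = M w_i$, independently of the selection order. Hence an ordered selection covers $E$ with dynamic weight $\le MW$ exactly when the corresponding subcollection covers $E$ with static weight $\le W$; the construction is plainly polynomial, which gives $Y\le_P X$ and closes the argument.

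The step I expect to demand the most care is the reduction, specifically reconciling the mandated weight form $w_i=|\xi_i|+c_i$ with the constraint $c_i\ge 0$: one cannot simply declare $c_i=w_i$, which is precisely why the scaling by $M$ is introduced, and one must then verify that $M$, the rescaled weights $Mw_i$, and the rescaled threshold $MW$ all stay polynomial in the original input size. The other point worth checking carefully is that neutralising the dynamics via $E'=\emptyset$ genuinely collapses the dynamic optimum onto the static one — i.e.\ that permitting reordering cannot endow the dynamic instance with strictly more power here — which is exactly what the identity ``weight $\equiv Mw_i$ regardless of order'' guarantees. Everything else (NP membership, well-posedness of the decision version) is routine.
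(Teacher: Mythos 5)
Your proof is correct and follows essentially the same route as the paper: polynomial-time verifiability for NP membership, then a reduction from the minimum weighted subset cover problem obtained by taking $E'=\emptyset$ so that the update rule of Eq.~(\ref{eqnupdate}) never changes any weight. Your version is in fact slightly more careful than the paper's, since you explicitly handle the mandated weight form $w_i=|\xi_i|+c_i$ with $c_i\ge 0$ via the scaling by $M$, a point the paper's argument glosses over.
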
 
\begin{proof}
Based on the properties given in Theorem 1, we examine them one by one for the  minimum dynamic weighted subset cover problem.

\begin{minipage}[t]{\linewidth}

\begin{enumerate}[I.]

\item  When we are given a set of sets, it is known to take polynomial time in terms of $O(|U|)$ to be able to check whether the union of these sets covers the set $E$.

\item  Let us choose the minimum weighted subset cover problem which is known to be NP-complete.

\item   If there is an algorithm for the minimum dynamic weighted subset cover problem, it can also be used to solve the minimum weighted subset cover problem for any instance when $E^{\prime}=\emptyset$ which ensures that costs remain constant.


\end{enumerate}

\end{minipage}
\end{proof}



\begin{theorem}
An instance of a possible solution in LDPC repair problem can be applied in polynomial time.
\end{theorem}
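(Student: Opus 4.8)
The plan is to establish that the LDPC node-repair problem lies in NP, i.e.\ to satisfy the first requirement listed in Theorem~1, by exhibiting a polynomial-time procedure that takes a candidate solution and \emph{applies} it, producing the induced bandwidth cost that a decision version would compare against a threshold. First I would pin down the encoding of an instance: the parity-check matrix $\mathbf{H}$ of size $(n-k)\times n$, the set $L$ of symbol indices that must be regenerated for the node under repair (so $|L|\le\lceil n/m\rceil$), the set of currently alive symbols among the remaining $n-l$, and the scalars $\rho_{\mathrm{BS}},\rho_{\mathrm{D2D}}$. The instance size is therefore $\Theta\big(n(n-k)\big)$ bits plus the $O(1)$ cost descriptors, and every later bound will be polynomial in this quantity, hence polynomial in $n$.

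Next I would describe the certificate. For a lost symbol $s_i$ the admissible recovery equations are exactly the rows of $\mathbf{H}$ with a nonzero entry in column $i$, of which there are at most $d_v$ for a $(d_v,d_c)$-regular code, and each such equation involves at most $d_c-1$ other symbols. A candidate solution is an \emph{ordered} list that, for every $i\in L$, selects one of these equations $r_{i,j_i}$ and fixes the order in which the symbols of $L$ are processed. This list has length $|L|\le\lceil n/m\rceil\le n$ and each entry merely names a row index of $\mathbf{H}$, so the certificate occupies $O\big(n\log(n-k)\big)$ bits, which is polynomial in the instance size.

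The application step is then a single left-to-right pass. Maintain a Boolean vector $\mathrm{avail}[1{:}n]$ set to $1$ on alive symbols and $0$ on $L$, and a running cost $c\leftarrow 0$. Taking the next pair $(s_i,r_{i,j_i})$ from the ordered list, scan the at-most $d_c-1$ symbols occurring in $r_{i,j_i}$, let $b$ be how many of them currently have $\mathrm{avail}=0$ (these are fetched from the BS, the rest via D2D), update $c\leftarrow c + \frac{F}{k}\big(b\,\rho_{\mathrm{BS}}+(d_c-1-b)\,\rho_{\mathrm{D2D}}\big)$, and finally set $\mathrm{avail}[i]\leftarrow 1$ so that $s_i$ may help later equations --- which is precisely the dynamic weight update of Eq.~(\ref{eqnupdate}). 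Each of the $|L|$ iterations inspects $O(d_c)=O(n)$ entries and performs $O(1)$ arithmetic, so the pass costs $O\big(|L|\,d_c\big)=O(n^2)$; an additional $O\big(n(n-k)\big)$ scan of $\mathbf{H}$ checks that every chosen $r_{i,j_i}$ is indeed a parity check incident to $s_i$ and that the list covers all of $L$.

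Combining the two bounds, the certificate has polynomial length and can be both validated and evaluated in time polynomial in the instance size, which is exactly the NP-membership requirement of Theorem~1; therefore the LDPC repair problem is in NP. The only point requiring care is the bookkeeping of the \emph{dynamic} weights: one must argue that a single forward pass over $\mathrm{avail}$ faithfully reproduces the ``a just-repaired symbol reduces the cost of subsequent repairs'' behaviour, i.e.\ that no look-ahead is needed because the processing order is already part of the certificate. This is a modelling subtlety rather than an algorithmic obstacle, so I expect the bulk of the work to lie in stating the semantics of ``applying a solution'' precisely enough that its equivalence with the cost in Eq.~(\ref{eqnupdate}) is immediate.
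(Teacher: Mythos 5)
Your proposal is correct and follows essentially the same route as the paper: a candidate solution is an ordered list of at most $l\le n$ recovery equations, each touching at most the maximum row degree many symbols, so a single forward pass applies (and here also validates and costs) it in $O(l\,d_c)\subseteq O(n^2)$ time, matching the paper's $O(lh)$-at-most-$O(n^2)$ bound. Your version is merely more explicit about the certificate encoding, the availability bookkeeping, and the cost evaluation, but introduces no new idea beyond the paper's argument.
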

\begin{proof}
When there are $l$ different lost symbols, an instance solution is comprised of a permutation of at most $l$ different recovery equations.   Regardless of the order of the permutation, any lost symbol requires at most $(h-1)$ binary additions where $h$ is the maximum row degree  in the parity check matrix. The number of additions (XOR operations) will be further decreased in case some of the recovery equations contain symbols that have been already repaired. Since, generally, $h\ll n$, the solution can be applied in $O(lh) \in O(n)$ time, but at most in $O(n
^2)$ .
\end{proof}

\begin{figure}[!t] 
	\centering
\includegraphics[width=0.65\textwidth]{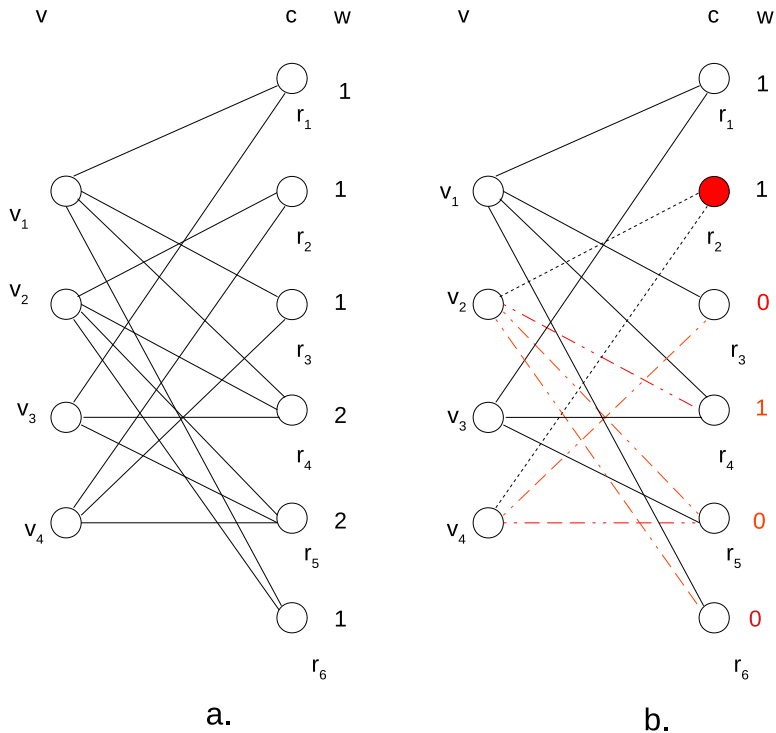} 
\caption{The tanner graph representing dynamic subset cover problem.}\label{tan}
\end{figure}

\begin{theorem}
  LDPC repair problem with the minimum BS intervention is a special class of  the dynamic weighted subset cover problem with $E=E^{\prime}$.
\end{theorem}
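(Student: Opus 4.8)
The plan is to exhibit an explicit encoding of an arbitrary LDPC node‑repair instance as an instance of the \emph{minimum dynamic weighted subset cover problem} in which the set to be covered and the set driving the weight updates coincide, i.e. $E=E^{\prime}$.

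First I would fix notation for a repair instance: a parity‑check matrix $\mathbf{H}$, the set $L$ of lost symbols to be regenerated in the stripe, and the two unit costs $\rho_{\mathrm{D2D}},\rho_{\mathrm{BS}}$ (after rescaling I may assume $\rho_{\mathrm{BS}}=1$, which alters neither the feasible solutions nor the minimizer). Set $U=E=E^{\prime}=L$. For every lost symbol $s_i\in L$ and every recovery equation $r_{i,j}$ of $s_i$ — equivalently, every row $h$ of $\mathbf{H}$ whose support contains column $i$ — introduce the set $\xi_{(i,j)}=\{\,s_\ell\in L : \ell\in\mathrm{supp}(h)\,\}$, the collection of \emph{lost} symbols touched by that parity check; this always contains $s_i$, so selecting it contributes $s_i$ to the cover. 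Attach the constant offset $c_{(i,j)}=\rho_{\mathrm{D2D}}\cdot|\mathrm{supp}(h)\setminus L|$, the fixed D2D price of the alive helpers of that equation, so that the initial weight $w_{(i,j)}=|\xi_{(i,j)}|+c_{(i,j)}$ equals, up to the rescaling, the bandwidth spent when $r_{i,j}$ is applied with no previously recovered helper at hand.

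Next I would establish the correspondence of solutions and costs. A feasible LDPC repair schedule is an ordered list of recovery equations that regenerates every symbol of $L$; replacing each chosen equation by its set $\xi_{(i,j)}$ yields an ordered family whose union is $L=E$, hence a feasible cover, and the reverse reading recovers a schedule. The decisive point — and the one that forces $E^{\prime}=E$ — is that the dynamic update (\ref{eqnupdate}) mirrors the repair dynamics: after a set $\xi_y$ has been selected, every symbol of $\xi_y\cap E^{\prime}$ has either been regenerated (the target) or fetched from the BS (a lost helper) and is thereafter available at no further cost, so the price of any later equation $r_{i,j}$ drops by exactly $\rho_{\mathrm{BS}}$ for each such symbol it still references; this is precisely $w_{(i,j)}=|\xi_{(i,j)}\setminus(\xi_y\cap E^{\prime})|+c_{(i,j)}$. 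Because the symbols that become free are exactly the lost symbols, i.e. exactly the elements of $E$, there is no room for an auxiliary element set and we must take $E^{\prime}=E=L$. Summing over the ordered family, the total weight equals the total repair bandwidth up to the fixed rescaling and an additive constant from the offsets $c_{(i,j)}$, so a minimum‑bandwidth repair schedule is a minimum‑weight dynamic subset cover and conversely; hence LDPC repair with minimal BS intervention is the stated special case.

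The step I expect to be the main obstacle is making the cost bookkeeping airtight. I would need to argue that it is never beneficial to re‑download an alive helper (so that the $c_{(i,j)}$ are genuinely constants), that crediting the selection of $\xi_{(i,j)}$ with covering \emph{all} its lost symbols — not only the designated target $s_i$ — is without loss of optimality because a BS‑fetched lost helper may be kept as its regenerated value at cost $\rho_{\mathrm{BS}}$, and that collapsing the two unit costs into a single scalar weight via $\rho_{\mathrm{BS}}=1$ does not perturb the optimizer. Each of these is intuitively evident but requires a brief argument, and together they are what upgrades the informal resemblance ``LDPC repair looks like a subset cover'' into the claimed membership in the problem class, with $E=E^{\prime}$ as its distinguishing structural feature.
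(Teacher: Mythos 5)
Your encoding is correct in spirit and rests on the same central observation as the paper's proof: recovery equations become sets of the lost symbols they touch, weights encode download cost, and the dynamic update $w_i=|\xi_i\setminus(\xi_y\cap E^{\prime})|+c_i$ models the fact that once a lost symbol has been regenerated or fetched it is a free helper for every later equation --- which is exactly what forces $E^{\prime}=E$. The two places where you diverge from the paper are worth noting. First, the paper does not take $E=U$: it sets $U=\mathcal{L}$ (all lost symbols) but $E=E^{\prime}=\mathcal{L}\setminus\mathcal{L}_1$, where $\mathcal{L}_1$ is the set of lost symbols admitting at least one BS-free recovery equation; only the symbols genuinely needing BS assistance must be covered, so the \emph{subset} structure of the problem is actually used, whereas your choice $E=U=L$ collapses it to an ordinary dynamic weighted set cover (still formally an instance with $E=E^{\prime}$, so the theorem statement survives). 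Second, the paper's weight is just $d_{r_i}-1$, the number of lost helpers that must be pulled from the BS, matching the stated objective of \emph{minimum BS intervention}; your offsets $c_{(i,j)}=\rho_{\mathrm{D2D}}|\mathrm{supp}(h)\setminus L|$ fold in the D2D cost and thus model total bandwidth instead. That buys you a more general objective, but it also creates the bookkeeping obligations you yourself flag (shared alive helpers being charged twice, the rescaling $\rho_{\mathrm{BS}}=1$), none of which arise in the paper because alive helpers carry zero weight there. Both encodings share the same off-by-one looseness against the template $w_i=|\xi_i|+c_i$ with $c_i\geq 0$ (the paper's $d_{r_i}-1$ is $|\xi_i|-1$; yours charges $\rho_{\mathrm{BS}}$ for the target symbol itself), so neither argument is fully airtight on that point, but your construction is an acceptable alternative proof of the claimed membership.
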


\begin{proof}
Let us define $\mathcal{L}$ as the set that consists of  all the lost symbols whereas  $\mathcal{L}_1$ consists of the lost symbols having at least one recovery equation which does not require any base station communication. In Figure \ref{tan}, left hand side nodes (variable nodes) represent the lost symbols in $\mathcal{L}\setminus \mathcal{L}_1$ while the right hand side nodes  (check nodes) represent the recovery equations, namely $r_i$ in the given Tanner graph. Each $r_i$  can be considered as a set of variable nodes induced by the corresponding edges in the Tanner graph. For instance, $r_2$ having degree $d_{r_2}=2$ covers $v_2$ and $v_4$. Let us define $R_2$ to be $\bigcup r_i$ for all $r_i$ that would be a recovery equation for a lost symbol in $\mathcal{L}\setminus \mathcal{L}_1$. without loss of generality, each equation $r_i$ has a weight equivalent/proportional to the value of $d_{r_i}-1$. However, after $r_2$ is selected first the cost as much as $d_{r_2}-1$ is paid, then the edges between other check nodes and the variable nodes connected to $r_2$ are removed from the graph  (see Figure \ref{tan}-b).  Based on this description,  we finally set the inputs as  $E=\mathcal{L}\setminus \mathcal{L}_1$,  $E'={\mathcal{L}\setminus \mathcal{L}_1}$, $U=\mathcal{L}$ and $\bigcup \xi_i= \bigcup r_i$ then it is not hard to see that the optimum  LDPC repair problem turns into a dynamic weigthed set cover problem with $E=E^{\prime}$.
\end{proof}
 
  \section{An Algorithm for Efficient Regeneration of Multiple Symbols}
{Hence, it is crucial to design an efficient algorithm for this repair problem. To this end, a novel greedy algorithm named  \textit{Greepair} is proposed, in which the lost symbols are repaired with minimal BS assistance.}  More specifically,  Greepair algorithm is a two-phase iterative algorithm as detailed in Algorithm \ref{alg1} and Algorithm \ref{alg2} where  $\mathcal L$ contains all symbol indices stored in the lost node and {$\mathcal G$ contains  all lost symbol indices in the storage system, i.e., it also includes the indices of the symbols stored in the other lost nodes}. The algorithm iteratively repairs lost symbols and then removes  their indices from $\mathcal L$  until there remain no symbol indices. In the first phase of the algorithm, the repair operations  that do not require any BS communication are performed. The indices of the lost symbols that can be repaired in this way are stored in a set $\mathcal L_1$. In the second phase, symbols whose repairs involve BS interaction are repaired. 

\textit{{1) The First Phase of the Algorithm:}}
{This phase is a variant of the peeling decoding \cite{peelingLuby} process of LDPC codes  which is originally proposed for reliable communications over erasure channels. In this phase, potentially a maximum number of symbols are attempted to be repaired without BS communication. At the same time,  these local repair operations are desired to be performed using minimal helper symbol downloads from other nodes.}

The algorithm commences with defining recovery equations that correspond to the lost symbol indices whose repairs do not necessitate any BS communication using the parity check matrix of the code. {In other words, in the first phase, recovery equations having minimum cardinality for symbols in $\mathcal L_1$ are identified. These recovery equations are stored as an array, namely $\mathcal{R}_1$, in which each recovery equation  contains helper symbol indices for the corresponding lost symbol in $\mathcal L_1$}. {Since $\mathcal{R}_1$ will be updated after each symbol repair, to keep the original forms of the recovery equations, before starting the symbol repair operations, $\mathcal{R}_1$ is copied to $\mathcal{R}_1''$.} {Afterwards, equations with the minimum cardinality, i.e. requiring minimum helper symbols, are iteratively selected from $\mathcal{R}_1$ for repairing each lost symbol.} {In the meantime, another equation set called $\mathcal{R}_2$ is constructed which will be used in the next phase of the algorithm.} {The set $\mathcal{R}_2$ holds  all recovery equations for lost symbols whose repair processes requires BS communication.} Likewise, each equation in ${R}_2$ consists of the indices of the symbols that take place in recovery equations for the lost symbols in $\mathcal L \setminus \mathcal L_1$. Similar to $\mathcal{R}_1''$;  $\mathcal{R}_2''$ initializes with the copy of $\mathcal{R}_2$. 

\begin{algorithm}[t!]
\caption{Greepair Algorithm  The First Phase} \label{alg1}
{\textbf{input} $\mathcal G$, $\mathcal L$, $\mathcal L_1$ }~$\triangleright$ {$\mathcal G$ is the set of the lost symbols in other lost nodes } \vspace{-4mm} \\
{\textbf{output} $\mathcal{R}_2, \mathcal{R}_2''$}  \vspace{-4mm}\\
{\textbf{function} \textbf{FirstPhase}($\mathcal L$,~$\mathcal L_1$)} \vspace{-4mm}
\begin{algorithmic}[1]
\STATE {construct ~$\mathcal{R}_1 $ and $\mathcal{R}_2 $} \vspace{-4mm}
\STATE {sort ~$\mathcal{R}_1 $ and $\mathcal{R}_2 $ according to recovery equation set sizes in ascending order } \vspace{-4mm}
\STATE {$\mathcal{R}_1'' \gets \mathcal{R}_1$,~$\mathcal{R}_2''\gets \mathcal{R}_2$} \vspace{-4mm}
\WHILE {$ \mathcal{R}_1\neq \emptyset \wedge \mathcal{L}_1 \neq \emptyset $ } \vspace{-4mm}
\STATE {$R \gets  \mathcal{R}_1''[0]$ } \vspace{-4mm}
\STATE{repair symbol $\mathcal F(R)$ using $R$} \vspace{-4mm}
\STATE {$\mathcal L \gets \mathcal L \setminus \mathcal F(R)$ } \vspace{-4mm}
\STATE {$\mathcal L_1 \gets \mathcal L_1 
\setminus \mathcal F(R)$} \vspace{-4mm}

\STATE {$\mathcal{R}_1  \gets \mathcal{R}_1 \setminus r , \forall   ~r \in \mathcal{R}_1 \wedge \mathcal F(R) = \mathcal{F}(r) $}\vspace{-4mm}
\STATE {$\mathcal{R}_1''  \gets \mathcal{R}_1'' \setminus r , \forall   ~r \in \mathcal{R}_1'' \wedge \mathcal F(R) = \mathcal{F}(r) $} \vspace{-4mm}
\STATE {$r \gets r \setminus  \{\mathcal  F(R) \cup R\}$ $,\forall ~r \in \mathcal{R}_1   $} \vspace{-4mm}
\STATE {$r \gets r \setminus \{ \mathcal F(R) \cup R \}, \forall ~r \in \mathcal{R}_2$} \vspace{-4mm}
\WHILE{$\exists ~ r \mid  r \cap \{  \{ \mathcal  L \setminus \mathcal L_1 \} \cup \mathcal G \}=\emptyset \wedge r \in \mathcal{R}_2$}\vspace{-4mm}
\STATE {$idx \gets i \mid \mathcal{R}_2[i] = r$} \vspace{-4mm}
\STATE {$\mathcal{R}_2 \gets \mathcal{R}_2 \setminus r$  } \vspace{-4mm}
\STATE {$\mathcal{R}_1'' \gets  \mathcal{R}_1'' \cup \mathcal{R}_2''[idx]$ }\vspace{-4mm}
\STATE {$\mathcal{R}_2'' \gets \mathcal{R}_2'' \setminus \mathcal{R}_2''[idx]$  }\vspace{-4mm}
\STATE {$\mathcal{R}_1 \gets  \mathcal{R}_1 \cup r$,$~\mathcal{L}_1 \gets  \mathcal{L}_1 \cup \mathcal F (r)$ } \vspace{-4mm}
\STATE {$\mathcal{R}_2  \gets \mathcal{R}_2 \setminus r_i , \forall   ~r_i \in \mathcal{R}_2 \wedge \mathcal F(r_i) = \mathcal{F}(r) $}\vspace{-4mm}
\STATE  {$\mathcal{R}_2''  \gets \mathcal{R}_2'' \setminus r_i , \forall   ~r_i \in \mathcal{R}_2'' \wedge \mathcal F(r_i) = \mathcal{F}(r) $} \vspace{-4mm}
\ENDWHILE \vspace{-4mm}
\STATE  {$\mathcal{R}_1'' \gets \mathcal{R}_1'' \setminus  R$ } \vspace{-4mm}
\STATE  {$\mathcal{R}_1 \gets \mathcal{R}_1 \setminus   \mathcal{R}_1[0]$ } \vspace{-4mm}
\STATE {Update orders of $\mathcal{R}_1$ and $\mathcal{R}_2$  with respect to the cardinalities of recovery equations} \vspace{-4mm}
\STATE {Reorder $\mathcal{R}_1''$ and $\mathcal{R}_2''$ according to $\mathcal{R}_1$ and $\mathcal{R}_2$, respectively.} \vspace{-4mm}
\ENDWHILE \vspace{-4mm}
\STATE { \textbf{SecondPhase}($\mathcal G$,~$\mathcal L$,~$\mathcal{R}_2 $,~$\mathcal{R}_2''$)} 
\end{algorithmic}
\end{algorithm}
\begin{algorithm}
\caption{Greepair Algorithm The Second Phase}  \label{alg2}
{\textbf{input} $\mathcal G$, $\mathcal L$,~ $\mathcal{R}_2 $,~$\mathcal{R}_2''$}  \vspace{-4mm}\\
{\textbf{output} $BSScore$}  \vspace{-4mm}\\
 { \textbf{function} \textbf{SecondPhase}($\mathcal L$, $\mathcal{R}_2 $,$\mathcal{R}_2''$)} \vspace{-4mm}
\begin{algorithmic}[1]
\WHILE{$\mathcal{L} \neq \emptyset$} \vspace{-4mm}
\STATE {$R \gets \mathcal{R}_2''[0]$} \vspace{-4mm}
\STATE {$\mathcal{R}^{'} \gets \mathcal{R}_2[0]$} \vspace{-4mm}
\IF {$|\mathcal{R}^{'}  \cap \{\mathcal L\cup \mathcal G  \}|\neq 0$} \vspace{-4mm}
\STATE {download $F(\mathcal{R}^{'})$ from BS } \vspace{-4mm}
\STATE $BSScore=BSScore+ 1$\vspace{-4mm}
\ELSE \vspace{-4mm}
\STATE{repair the lost symbol using $F(\mathcal{R}^{'})$ 
without BS} \vspace{-4mm}
\STATE  {$r  \gets r \setminus \mathcal{R}^{'} , \forall ~r \in \mathcal{R}_2$}  \vspace{-4mm}
\ENDIF \vspace{-4mm}
\STATE $\mathcal L \gets \{\mathcal L \setminus \mathcal{F(R)} \} $\vspace{-4mm}
\STATE {$r \gets r \setminus \left \{ \mathcal{F(R)} \right \} \forall ~r \in \mathcal{R}_2 $ } \vspace{-4mm}
\STATE {$\mathcal{R}_2  \gets \mathcal{R}_2 \setminus r , \forall ~r \in \mathcal{R}_2 \wedge \mathcal{F(R)} = \mathcal{F}(r)  $} \vspace{-4mm}
\STATE {$\mathcal{R}_2''  \gets \mathcal{R}_2'' \setminus r , \forall ~r \in \mathcal{R}_2'' \wedge \mathcal{F(R)} = \mathcal{F}(r)  $}\vspace{-4mm} 
\STATE {Update the order of $\mathcal{R}_2$   with respect to their BS needs and correlation.} \vspace{-4mm}
\STATE {reorder $\mathcal{R}_2''$ according to  $\mathcal{R}_2$} \vspace{-4mm}
\ENDWHILE 
\end{algorithmic}
\end{algorithm}
Next, {the equations} in sets  $\mathcal{R}_1$ and $\mathcal{R}_2$ are sorted in ascending order based on their cardinalities. {Both of the two phases  use a function  $\mathcal{F}(X)$  whose domain is the set of recovery equations, and the codomain is the set of the lost symbols. This function is defined for mapping a recovery equation to the lost symbol that it  repairs.} 
 Whenever a lost symbol $s_i$ is successfully repaired, it is removed from the sets $\mathcal L$ and ~$\mathcal L_1$. Since there is no need for any equation $r$ such that $\mathcal{F}(r)=s_i$, $r$ is removed from $ \mathcal{R}_1$ as well as from $ \mathcal{R}_1''$ {(see line 9 and line 10 in Algorithm \ref{alg1})}.  To identify the next {repair}, the helper symbols as well as the previously repaired symbols are removed from the {equations} in  $\mathcal{R}_1$ and $\mathcal{R}_2$  {(see line 11 and line 12 in Algorithm \ref{alg1})}. Afterwards,  necessary updates are conducted based on whether the repaired symbol eliminates the need of contacting the BS for any equation in ${R}_2$. If there exists such an equation, {$ r_x = \mathcal{R}_2[i], i \in \{0,1, \hdots |\mathcal{R}_2|-1\}$} where {{$\mathcal{R}_2[i]$ is the $({i+1})$-th equation of $\mathcal{R}_2$}},  it is removed from $\mathcal{R}_2$ (the corresponding equation is removed from $\mathcal{R}_2''$) and this equation and  $\mathcal F(r_x)$ are added to  $\mathcal{R}_1$ (and the corresponding equation is added to $\mathcal{R}_1''$) and  $\mathcal L_1$, respectively.  {All equations} $r_i$ satisfying $\mathcal F(r_x)=\mathcal F(r_i)$ are also removed from $\mathcal{R}_2$ and $\mathcal{R}_2''$. These operations are repeated greedily until no equation is left in $\mathcal{R}_1$.
 
\textit{{2) The Second Phase of the Algorithm:}}
After the first phase, the remaining lost symbols will be repaired using the equations in $\mathcal{R}_2$, which themselves may have lost symbols to be repaired with the BS assistance.  As long as there are lost symbols in $\mathcal L$, the next reduced equation involving the fewest BS communication rounds from the set $\mathcal{R}_2$ is selected. {When there are multiple equations ensuring the minimum use of BS, the equation of a lost symbol $\mathcal F(r)$  with the highest $|\{\bigcup j | ~r_j \cap \mathcal F(r) \neq \emptyset, \forall r_j \in \mathcal{R}_2 \}| $ is selected.
To put it another way, the equation of a lost symbol which is used most frequently in other recovery equations in $\mathcal{R}_2$ is selected.} If the selected equation requires any BS contact, the lost symbol is downloaded from the BS directly. Next, necessary updates are performed on the sets $\mathcal{R}_2$ and $\mathcal{R}_2''$ while preserving the cardinality of sets in ascending order based on the required  BS usage.
Note that repaired symbols can be used as helper symbols in the next phase.  The fact that no repairs can be carried out without any BS interaction does not necessarily mean that no repair is possible without any BS interaction in the subsequent iterations of the algorithm. In fact, the next repair operation may be performed using the previously repaired (potentially with the help of BS) symbol(s) as a helper symbol. 

\section{{Numerical} Evaluations}
\subsection{Notes on Comparisons}
{We shall investigate the same-rate codes, i.e., the same $\frac{k}{n}$, for RS, MBR, and LDPC codes as performed in \cite{park2017ldpc}. Due to the graph construction of LDPC codes, the parameter $n$, the blocklength for LDPC codes are selected to be greater than that of RS, MBR, and MSR codes.  Moreover, in our simulations, the communication cost is calculated as follows: $\gamma_{c_{\Delta}}=\frac{\rho_{\mathrm{D2D}}\tau_{\Delta}+\rho_{\mathrm{BS}}\phi_{\Delta}}{F_\mathrm{{RS_{node}}}} = \frac{C_{\mathrm{code}}(.)}{F_\mathrm{{RS_{node}}}}$ where $\tau_{\Delta}$ and $\phi_{\Delta}$ represent the number of symbols downloaded from other nodes and  BS  to repair a lost node for a given time interval  $\Delta$, respectively, {and $C_{\mathrm{code}}(.)$ is the cost function of repairing a lost node for the  erasure code used, e.g., if RS is used, it is $C_{\mathrm{RS}}(.)$ or if MBR is used, it becomes $C_{\mathrm{MBR}}(.)$, etc.} Actually, the communication cost results for repairing a single lost node is obtained first and then, for the sake of simplicity, all the cost results are normalized by the size of a RS node which is $F_\mathrm{{RS_{node}}}=Fn_{\mathrm{RS}}/(k_{\mathrm{RS}}m_{\mathrm{RS}})$. 
Notice that, due to construction constraints for high rate MSR codes, we are bound to select similar code rates as that of the other codes. }
\subsubsection{Storage overhead}
{In RS and LDPC codes, to store a file of size $F$ {symbols}  using $R=k/n$ code, a total of $F/R$ {symbols}   should be stored in the whole system. Furthermore, a storage node should store $\frac{F}{Rm}$  {symbols}  where $m$ is the number of storage nodes.}

{To store a file of size $F$ {symbols}  using an $(n,k,d)$  MSR code, a total of $\frac{Fn(d-k+1)}{k(d-k+1)}$ {symbols}  should be stored in the system and a storage node would store $\frac{Fn(d-k+1)}{mk(d-k+1)}$ {symbols}  and the number of symbols downloaded from a helper node in the repair process is $\beta=1$. Assume that we use the same 
$R$ and $m$ with that of RS and LDPC codes and to be able to have the same storage overhead, the following equation should satisfy}
$\frac{Fn(d-k+1)}{mk(d-k+1)}=\frac{Fn}{km}$
which can hold for any $d$ and $k$.
Hence, the storage overhead of MSR codes is equal to that of RS and LDPC codes when these codes have the same rates and the same $m$. For MBR codes, in order to generate the same amount of overhead (with respect to RS codes and LDPC codes) we would need to choose $(n^\prime,k^\prime,d^\prime)$ such that 
$\frac{2Fd^\prime n^\prime}{mk^\prime(2d^\prime-k^\prime+1)} = \frac{Fn}{mk}$.
{Since both codes use the same $m$, the same level of reliability implies that both have to have the same ratio $k/n = k^\prime/n^\prime$, then we will immediately have $2d^\prime = 2d^\prime - k^\prime+1$. However, this would imply $k^\prime=1$ which is not practical for any circumstances.} Consequently, we need to give away from the reliability if we would like to ensure the same storage overhead. Thus, for a given storage overhead, it is impossible to get the same reliability. For the given reliability, it is impossible to get the same redundancy (storage overhead). {In our further comparisons, the storage overhead of MBR codes will be naturally larger than that of RS, LDPC and MSR codes for a given level of data reliability.}

\subsubsection{Node repair complexity}
{{Here we provide a brief summary of node repair complexities for all the codes compared. We realize that since the repair process is closely related to the code construction method at hand, different repair complexities can be obtained using different constructions in the literature.}
In {$(d_v,d_c)$ regular} LDPC codes, the node repair, which involves the regeneration of $\lceil n/m \rceil$ symbol(s), requires at most $(d_c-1)\lceil n/m \rceil$ XOR operations if local nodes are used. However, in our repair algorithm, additional operations, which require specific data structures are performed. In particular, three hashmap structures are used. The first one is used to hold the recovery equations for repairing a specific lost symbol, and the second one is used for keeping the sizes of these equations. Moreover, the third hashmap is used to map a symbol to a recovery equations that it is included.} {Initially, the recovery equations, which can be used to repair a lost symbol without BS usage, are identified. 

{There exist $d_v$ recovery equation alternatives for a lost symbol, each of which has at most $d_c$  symbols. In addition, to classify a recovery equation in terms of whether it requires BS communication or not, each of the lost $\mathcal L$ or alive $[n] \setminus \mathcal L$ symbols should be compared with at most $d_c$  terms forming the recovery equations}. This operation requires at most $\mathrm{O}(\frac{nd_vd_c}{m}  \min(|[n]\setminus \mathcal{L}|,|\mathcal{L}|))$ comparisons where $[n]=\{1,2,...,n\}$. {During this identification operation, a min-heap structure is used to define a recovery equation with the minimum cardinality for each lost symbol.} Before using the recovery operations to repair lost symbols, the size of heap is set to $\lceil \frac{n}{m}\rceil$.   Constructing this heap requires $\mathrm{O}(\frac{n}{m}(\log(\frac{n}{m})))$ steps. Each symbol repair requires at most $\mathrm{O}(d_c-1)$ XOR operations. After each symbol repair, at most  $d_vd_c$ updates are performed in the second hashmap  by utilizing other hashmaps to {assign the new  cardinalities of equations in $\mathcal{R}_1^{''}$ which is required for identifying the current needs of helper symbols.} Besides, the min-heap structure is updated at most $\min(\frac{n}{m},d_vd_c)$ times ({due to at most $d_vd_c$ items are affected and the heap size is at most $\lceil n/m \rceil$}) after each symbol repair  as well as  at most $\mathrm{O}(\frac{nd_v}{m})$ search operations  to find the recovery equation having the minimum size  for each lost symbol  are performed before this update. Consequently, all repair operations to repair $\lceil n/m \rceil$ symbols require at most $\mathrm{O}(\frac{nd_vd_c}{m}  \min(|[n]\setminus \mathcal{L}|,|\mathcal{L}|)+d_vd_c\frac{n}{m}+\min(d_vd_c,\frac{n}{m})\frac{n}{m}\log(n/m) +{(n/m)}^2d_v)$ comparisons. Therefore, increasing blocklengths of an LDPC code for a fixed storage node count increases the repair complexity. It should be noted that the inequalities $d_v,d_c \ll n$ generally hold for LDPC code applications.  Since these operations should be repeated $\frac{F}{k}$ times, the node repair operation involves at most 
{$\mathrm{O}(\frac{F}{k}(\frac{nd_vd_c}{m}  \min(|[n]\setminus\mathcal{L}|,|\mathcal{L}|)+ \min(d_vd_c,\frac{n}{m})\frac{n}{m}\log(n/m)+d_vd_c\frac{n}{m}+{(n/m)}^2d_v))$} comparisons {(Here, if we keep the copy of $\mathcal{R}_1$ in every iteration the factor of $\frac{F}{k}$ can be eliminated from the previous complexity expression.)} and  $\mathrm{O}(\frac{F}{k} \frac{n}{m}d_c)$ XOR operations.}

{In traditional $(n,k)$ RS codes, repairing $n/m$ symbols {with a naive method} requires $\mathrm{O}(k^3+k^2+ (n/m)k)$ finite field addition and multiplication operations. Hence, repairing a node involves $\mathrm{O}(k^3+Fk+ F(n/m)) $ operations.} {In MBR codes \cite{Rashmi11}, for repairing a lost node, each helper nodes computes $\mathrm{O}(\frac{F}{B_1}d(n/m))$ finite field multiplications and additions. The newcomer calculates  $\mathrm{O}((n/m)d^3)$ finite field multiplications and additions for inverting the sub-generator matrix, and $\mathrm{O}(d^2)$ multiplications for each $\frac{n}{m}$ symbols. Thus, repairing a lost node requires at most $\mathrm{O}((d^3+\frac{F}{B_1}d^2)\frac{n}{m})$ finite field operations.}
In MSR codes specified in \cite{Rashmi11}, repairing a lost node requires {$\mathrm{O}(\frac{F}{B_2}\alpha(n/m))$} finite  field additions and multiplications from each of the $d$ helper nodes and newcomer employs  $\mathrm{O}(d^3)$ finite field operations for inverting a submatrix as well as $\mathrm{O}(d^2)$ finite field addition and multiplication operations for each of $n/m$ symbols. {For repairing a lost node, similar to MBR codes, $\mathrm{O}((d^3+d^2\frac{F}{B_2})\frac{n}{m})$ finite field addition and multiplication operations are performed in MSR codes \cite{Rashmi11}.  {Hence, for moderate values of $d_v,d_c$, $\min(|[n]\setminus \mathcal{L}|,|\mathcal{L}|)$ and $n/m$,  Greepair algorithm ensures practical use by achieving low time complexity.} }

\subsection{Simulation Results} \label{sec:sims}
\subsubsection{Comparisons with other codes}
In this section, the cost of node repair  is analyzed both theoretically and by simulations in terms of the number of symbols downloaded from the BS as well as from the local network nodes. 
The simulations are conducted on the system model given in Section III. In our simulations, a file size of 128 KB is encoded, and the failures are repaired based on lazy repair with varying $\Delta$ time frames. Our results are normalized by the corresponding stored content size of an RS storage node, i.e., $F_{\mathrm{RS_{node}}}=\frac{F~.~n_{\mathrm{RS}}}{k_{\mathrm{RS}}~.~m_{\mathrm{RS}}}$. We also include our theoretical {average cost expressions} (upper bounds (UB) for LDPC codes and exact expressions for the other codes) in order  to  support our numerical results.

{The cost of node repair for systems using  MBR, MSR, RS, and LDPC codes having code rate $1/2$  is depicted in Fig. \ref{figcost12}}. We use $n=24,~k=12,~m=24$ for RS codes, MBR and MSR codes with $d=23$. {In addition, an LDPC code \cite{Eleftheriou2002} is used with parameters $n=908,~k=454$ and $m=24$}. {As for $(\rho_{\mathrm{D2D}}, \rho_{\mathrm{BS}})$, we have used the pairs $(1,1.2),~(1,12),~(1,17),~(1,26)$. Accordingly, the minimum cost is achieved by MBR codes followed by LDPC codes, MSR codes and RS codes when $\rho_{\mathrm{BS}}\leq 17$ and $\Delta \leq 0.4$. When  $\rho_{\mathrm{BS}} > 17$, RS codes begin outperforming LDPC codes. When  $\rho_{\mathrm{BS}}$ gets values higher than $26$, RS codes outperforms MBR codes in terms of bandwidth cost for some $\Delta$ values of interest.}
The highest count of symbol downloads from other nodes occurs in RS codes, and these results are followed by that of {MBR} codes. Moreover, LDPC, MSR, and MBR codes use $88\%$, {$88\%$} and $86\%$ fewer symbols, respectively, from other network nodes compared to that of RS codes.
\begin{figure}[!htb] 
\includegraphics[width=0.99\textwidth]{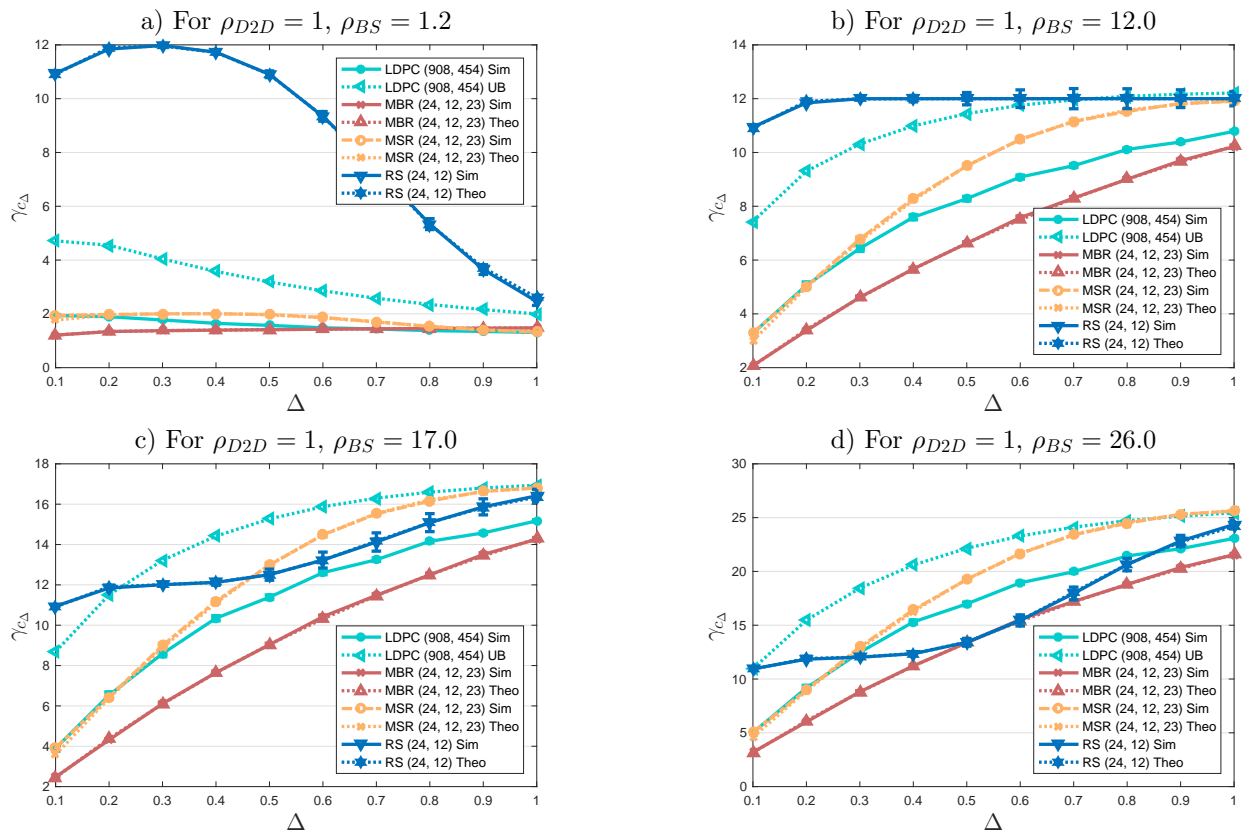} 
 \centering 
 \caption{The repair bandwidth costs for rate 1/2.}\label{figcost12}
\end{figure}

On the other hand,  the highest BS usage, in general, occurs when MSR codes are employed, followed by LDPC codes for $R=1/2$. The minimum BS usage is achieved by RS codes, followed by MBR codes.   It should be noted that MSR codes, RS codes and LDPC codes have the same storage overhead, whereas MBR codes have the highest storage space consumption amongst all of these erasure codes. On average, RS codes require $61\%$, $56\%$, {$66\%$} less BS interaction than LDPC, MBR, and MSR codes do, respectively, for  $R=1/2$.

In Fig. \ref{figcost34}, the node repair costs of systems using MSR, MBR, RS, and LDPC codes, having a code rate of roughly $3/4$  are shown. Parameters $n=24,~k=18, ~m=24$ are used for both RS and MBR codes with $d=23$. {In the simulations, we compare the cost of communication to repair a lost node of the codes having the same rate, i.e., having the same $\frac{k}{n}$ as in \cite{park2017ldpc}. However, notice that high rate MSR codes specified in \cite{EMSRHighRate}, could not exactly achieve the rate 3/4. Likewise, LDPC codes in \cite{EMSRHighRate} could not have the rate 18/23.  Hence, we use parameters  $n=23,~d=22,~k=18, ~m=23$ for the  MSR codes. In addition, {since these MSR codes have fewer storage nodes and consequently, have lower storage overhead in the whole system than the other codes have, to compare these codes on a more fair basis with the other codes}, we scale the results obtained for MSR  codes having rate 18/23 by $\frac{23}{24}=\frac{m_{\mathrm{MSR}}}{m_{\mathrm{RS}}}$.}
 {We use $(n=2056,~k=1542)$  LDPC code with $m=24$. As for $(\rho_{\mathrm{D2D}}, \rho_{\mathrm{BS}})$, the pairs $(1,3),~(1,18),~(1,24), ~(1,50)$ are utilized. When $\rho_{\mathrm{BS}}>3$, MSR codes become more costly than that of LDPC codes for small values of $\Delta$.  When $\rho_{\mathrm{BS}}>24$, RS codes become less costly than that of LDPC and MSR codes in some selected values of $\Delta$. However, RS codes can not outperform MBR codes even for high values of $\rho_{\mathrm{BS}}$ when $\Delta \geq 0.3$, mainly due to the total amount of  BS communication of MBR codes is less than that of RS codes.}
 \begin{figure}[htb!] 
	\centering
\includegraphics[width=\textwidth]{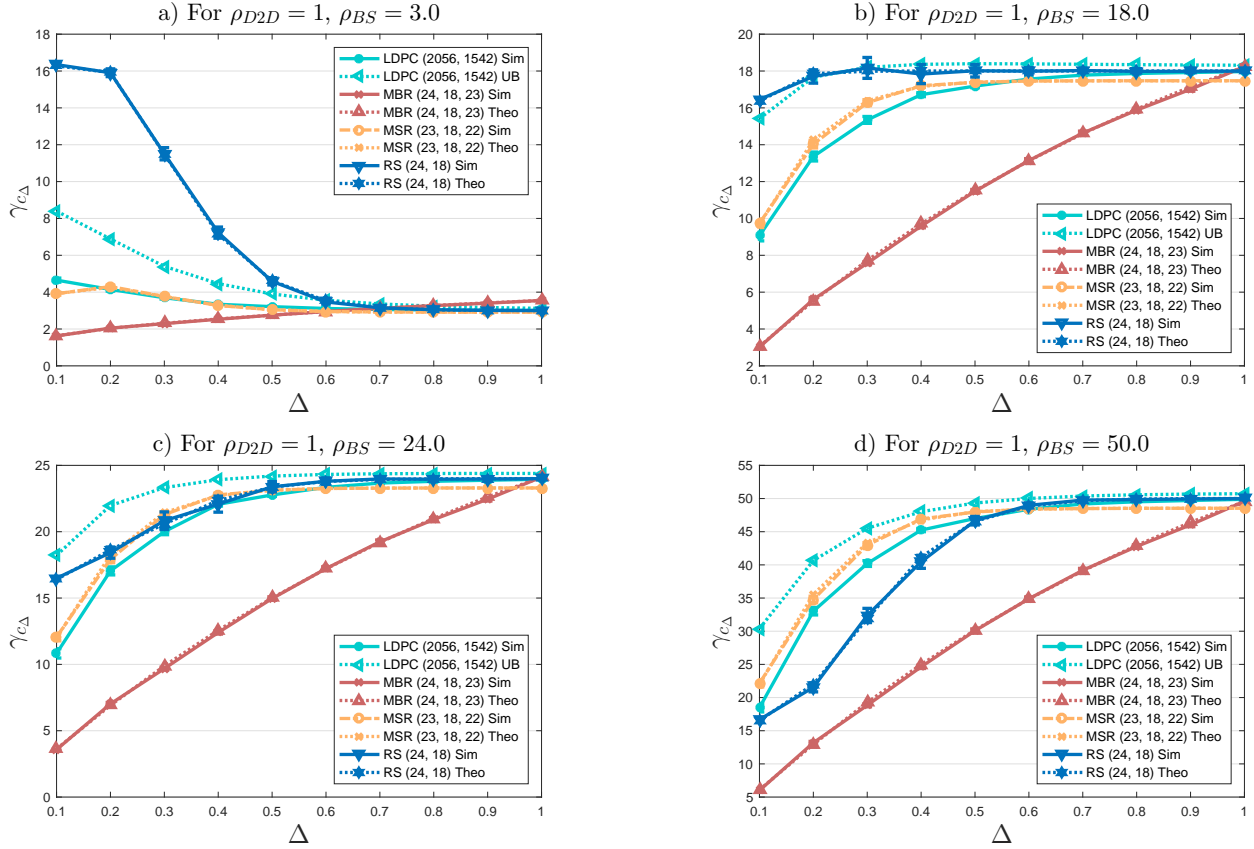} 
 \caption{The repair bandwidth costs for rate 3/4. }\label{figcost34}
\end{figure}
RS codes require the highest D2D bandwidth as long as $\Delta < 0.6$. When $\Delta \leq 0.3$, RS codes are followed by LDPC codes in terms of the number of symbols downloaded from other local network nodes and LDPC codes are followed by MSR and MBR codes. For  $\Delta > 0.6$, MSR codes require the least assistance from other network nodes.

It is clear from these results {that} MBR codes outperform other codes in terms of communication cost  {at the expense of having the highest storage overhead}. The highest BS communication occurs with MSR codes for $\Delta \leq 0.6$, whereas the differences are not dramatic for $\Delta>0.6$.  In the case of MBR codes, we download $34\%$ and $35\%$ fewer symbols from BS as compared to  LDPC and MSR codes, respectively, for the  code rate $3/4$. For $\Delta>0.2$,  MBR codes require $21\%$ fewer symbols from the BS than that of RS codes. 

The lowest BS communication is achieved using RS and MBR codes  for code rates $1/2$  and $3/4$, respectively. Besides, the lowest D2D communication is obtained when  MSR and LDPC codes are utilized for  $R=3/4$ \footnote{{Since lazy repair operations are executed at the end of the interval $\Delta$, it is likely to have no storage node departures for small values of $\Delta$,  which is taken into account by the expression  $b_i(m,p)$. In addition, when $i=m$, the number of symbols downloaded from the local nodes as well as from BS is equal to zero, which reduces the expected number of symbol downloads.}}.

{{Consequently, when a coded caching system is used which defines its crucial parameters as both low storage overhead, low time complexities of  encoding, reconstruction, LDPC codes can be a really good candidate for the low ratios of $\frac{\rho_{\mathrm{BS}}}{\rho_{\mathrm{D2D}}}$ (more specifically,  $\frac{\rho_{\mathrm{BS}}}{\rho_{\mathrm{D2D}}}\leq17$ for $R=1/2$, and $\frac{\rho_{\mathrm{BS}}}{\rho_{\mathrm{D2D}}}\leq24$ for $R=3/4$) among these  codes.} On the other hand, if the value of  $\frac{\rho_{\mathrm{BS}}}{\rho_{\mathrm{D2D}}}$ is high, the option of MBR codes would be preferable. But, these class of codes achieve this performance at the expense of increased storage overhead which may not be feasible from a total cost of ownership point of view for modern cellular systems backed by distributed data storage and caching.}
\subsubsection{LDPC codes with different blocklengths}
{In an attempt to compare the performance of the LDPC codes having different blocklengths, five different options are investigated for the code rate  $3/4$. {We have chosen $(248,186)$, $(424,318)$, $(1096,822)$, $(2168,1626)$ and
$(4024,3018)$ LDPC codes to be used over $m=25$ nodes. As for the pairs} $(\rho_{\mathrm{D2D}},\rho_{\mathrm{BS}})$, we have used $(1,1.2), (1,3),(1,16)$}.
 As shown in
 Fig. \ref{figrevlength34},
for the small values of the ratio, $\frac{\rho_{\mathrm{BS}}}{\rho_{\mathrm{D2D}}}$, the codes having small blocklengths involve slightly higher cost.
However, when this ratio becomes higher, the costs of the codes having different blocklengths become very similar. Besides, it should be noticed that the complexity of repair operation increases with growing blocklength. In  Fig. \ref{figrevlength34}-d), {we have provided}.  $\gamma_{c_\Delta}$ {as a function of} $\rho_{\mathrm{BS}}$ for $\Delta=0.5$. 
\begin{figure}[!h] 
	\centering
\includegraphics[width=0.95\textwidth]{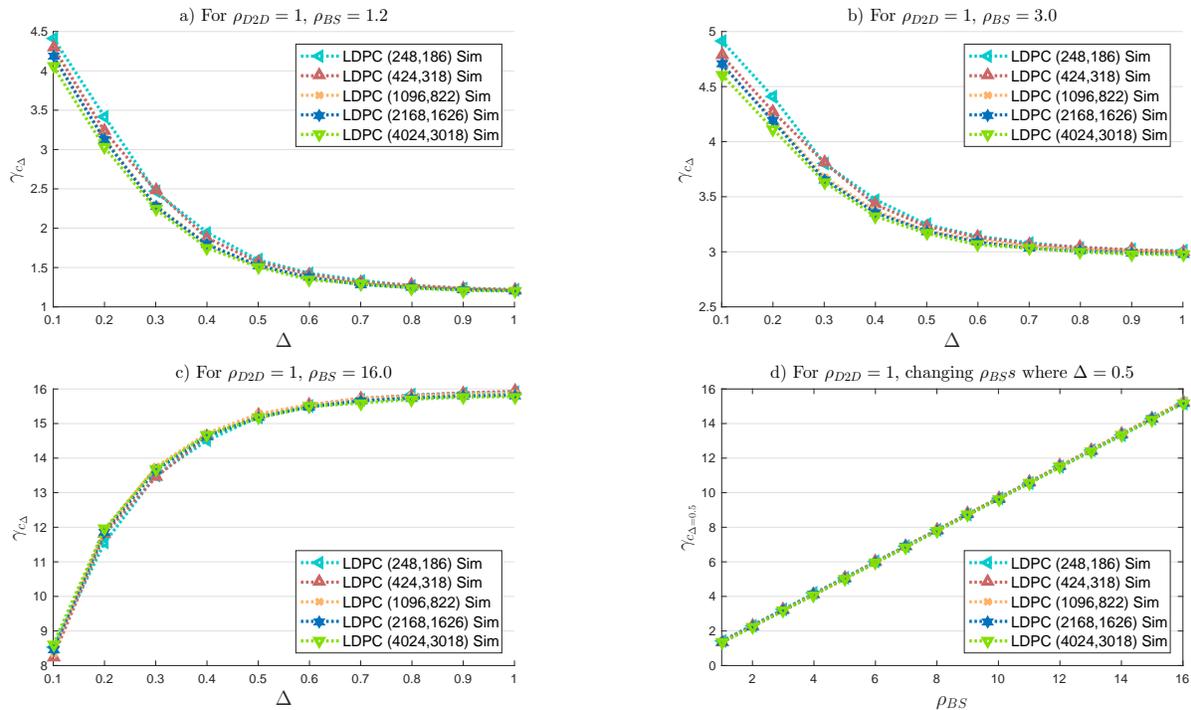} 
 \caption{The repair cost for different LDPC blocklengths for rate 3/4. }\label{figrevlength34}
\end{figure}

{Notice that when $\Delta$ increases, the number of BS enabled repairs increases because, in this case, the availability of helper symbols in local nodes is decreased. Let  $\bar{|r|}$ be the  expected number of downloaded helper symbols used to repair a lost symbol through local nodes.} { When $\rho_{\mathrm{BS}}/\rho_{\mathrm{D2D}}<\bar{|r|}$ is satisfied, the BS utilization in the repair process is advantageous in terms of node repair bandwidth cost. More explicitly, if a lost symbol is repaired through BS, only the original lost symbol is downloaded. However, if this lost symbol is repaired using local nodes only, then the symbols in the recovery equation should be downloaded, which would require $\bar{|r|}$ downloads in average from the other nodes.  Consequently, when $\rho_{\mathrm{BS}}/\rho_{\mathrm{D2D}}$ is less than $\bar{|r|}$, the total node repair bandwidth cost decreases with growing $\Delta$ since the BS utilization in node repairs increases with growing $\Delta$ as can be seen in Fig.  \ref{figrevlength34}-a) and Fig.  \ref{figrevlength34}-b). In contrast,  when $\rho_{\mathrm{BS}}/\rho_{\mathrm{D2D}}$ is higher than $\bar{|r|}$, the total node repair bandwidth cost increases with growing $\Delta$ as can be seen in Fig.  \ref{figrevlength34}-c). }

\subsubsection{Comparisons with the optimum solution}
In order to conceive the performance gap between optimum solutions and the proposed Greepair Algorithm, we have compared the proposed algorithm with  two different optimum strategies. The first optimum strategy, namely Opt.-1, aims to guarantee the lowest possible base station communication and ensure this condition by providing  the minimum possible device communication. The second optimum strategy, namely Opt.-2, aims to guarantee the overall lowest communication cost. For this strategy,  we have calculated $\gamma_{c_{\Delta}}$  with two different pairs for $(\rho_{\mathrm{D2D}},\rho_{\mathrm{BS}})$ which are $(\rho_{\mathrm{D2D}}=1,\rho_{\mathrm{BS}}=10)$ and $(\rho_{\mathrm{D2D}}=1,\rho_{\mathrm{BS}}=20)$. 

In our results, we have used $n=184, ~k=138$ for LDPC codes.
\begin{table}[!t]
\centering
 \begin{tabular}{|c|c|c|c|c|c|c|c|c|c|c|c|c|}
 \hline
\multicolumn{1}{|c|}{\multirow{2}{*}{\textbf{$\boldsymbol{n,k}$}}} & {\multirow{2}{*}{{$\boldsymbol{\lceil n/m \rceil}$}}}&{\multirow{2}{*}{{$\boldsymbol{\Delta}$}}} &\multicolumn{2}{c|}{\textbf{Opt.- 1}} & \multicolumn{2}{c|}{\textbf{Opt.-2}}  \\
\cline{4-7}
&&\multicolumn{1}{|c|}{}&  $\tau_{\Delta}$& $\phi_{\Delta}$ &$\boldsymbol{\rho_{D2D}=1,\rho_{BS}=10}$&$\boldsymbol{\rho_{D2D}=1,\rho_{BS}=20}$\\
  \hline \hline
 \multicolumn{1}{|c|}{\parbox[c]{15mm}{\multirow{6}{=}{~~~~~\rotatebox[origin=c]{90}{~~~$\boldsymbol{n=184, ~k=138~~~~~}$}}}} &  \parbox[t]{15mm}{\multirow{3}{*}{~~~~~\rotatebox[origin=c]{90}{$~~\boldsymbol{\lceil n/m \rceil=3}~$}}}& \textbf{$\boldsymbol{\Delta=0.1}$  }  & 2.601 \%& 0.569\%  & 1.927 \%  & \multicolumn{1}{c|}{1.589\%}   \\
 \cline{3-7}
  & & \textbf{$\boldsymbol{\Delta=0.4}$ } & 0.012 \% &  0.130\%  & 0.117 \% & \multicolumn{1}{c|}{0.123\%}\\
      
\cline{3-7}
   &   &  \textbf{$\boldsymbol{\Delta=0.7}$ } & 0\% & 0\%  &0\% & \multicolumn{1}{c|}{0\%}\\
   \cline{2-7}     \cline{2-7}
      \cline{2-7}     \cline{2-7}
&\parbox[t]{15mm}{\multirow{3}{*}{~~~~~\rotatebox[origin=c]{90}{~~~$~\boldsymbol{\lceil n/m\rceil=6}~$}}}& \textbf{$\boldsymbol{\Delta=0.1}$  } & 8.298\% & 1.871\% & 6.044\%& \multicolumn{1}{c|}{4.998\%}\\
 \cline{3-7}
    &  &  \textbf{$\boldsymbol{\Delta=0.4}$ } & 2.222\% & 0.194\%  & 0.375\%   &  \multicolumn{1}{c|}{0.289\%}\\
      
\cline{3-7}
     & &  \textbf{$\boldsymbol{\Delta=0.7}$ }& 1.201\% & 0\% & 0\% &  \multicolumn{1}{c|}{0\%}\\
      \hline\hline
 \end{tabular}
 \caption{ \label{tab:OptimalLPPCImprovements} The percentages of the improvements obtained by optimum strategies over Greepair. }
\end{table}
Since  finding the optimum solution for LDPC node repair problem is shown to be a special case of an NP-Complete problem in \cite{HaytaogluArxiv}, we used exhaustive search. Moreover, up to $(\lceil n/m \rceil)!(d_v+1)^{\lceil n/m \rceil}$ different options exist for repairing a lost node to find the optimum cost value,  moderate $\lceil n/m \rceil$ values such as $\lceil n/m \rceil=3$,   and $\lceil n/m \rceil=6$ are used in numerical results.  

In Table \ref{tab:OptimalLPPCImprovements}, the   improvements in terms of percentages  achieved by optimum strategies over the Greepair algorithm are demonstrated. More explicitly, in Table \ref{tab:OptimalLPPCImprovements}, the improvement of Opt.-1 strategy in terms of decrement on $\tau_{\Delta}$ and $\phi_{\Delta}$ according to Greepair and the improvement of  Opt.-2 strategy over Greepair  in terms of $\gamma_{c_{\Delta}}$ are shown. The improvements are observed to be increasing with growing $\lceil n/m \rceil$ and especially for $\Delta=0.1$. Since there are many node failures  for  high values of $\Delta$, the advantage of the optimum strategy is less prominent as expected.
\section{Conclusion}
In this paper, we investigated the possibility of LDPC code utilization within the context of coded caching {with appropriate repair and regeneration}. To that end, the other well-known erasure coding techniques such as RS, MBR, and MSR codes are  analyzed with respect to their bandwidth usage from the BS and other network nodes. In order to achieve a node repair with minimal cost, a novel greedy algorithm called Greepair is proposed. According to the theoretical and simulation results, if the gap between  ${\rho_{\mathrm{BS}}}$ and ${\rho_{\mathrm{D2D}}}$ is not dramatically high, {LDPC codes can be a very reasonable choice due to  their low encoding and decoding (reconstruction) complexities as well as their low storage overhead.} As a future work, special irregular designs for LDPC codes, which would further reduce the BS requirements in the repair process, will be considered.


%
\section*{Acknowledgment}
This study is supported by TUBITAK under grant number 119E235 and Pamukkale University Scientific Research Projects Department under grant number 2018FEBE009. {We thank the anonymous reviewers for their suggestions and comments.}

\ifCLASSOPTIONcaptionsoff
  \newpage
\fi

\bibliographystyle{IEEEtranTCOM}
\bibliography{paper}
\end{document}